\tikzset{>=stealth}
\definecolor{lightred}{RGB}{255, 240, 240}
\definecolor{lightblue}{RGB}{230, 250, 255}
\definecolor{lightgreen}{RGB}{240, 255, 242}
\definecolor{myred}{RGB}{220, 0, 0}
\definecolor{myblue}{RGB}{0, 17, 173}
\definecolor{mygreen}{RGB}{2, 117, 0}
\newtheorem{lemma}{Lemma}
\newtheorem{theorem}{Theorem}
\newcommand{\E}[1]{\mbb{E}{\left[#1\right]}}
\newcommand{\Var}[1]{\mrm{Var}{\left[#1\right]}}
\newcommand{\Cov}[1]{\mrm{Cov}{\left[#1\right]}}
\newcommand{\mca}[1]{\mathcal{#1}}
\newcommand{\mbf}[1]{\mathbf{#1}}
\newcommand{\mbb}[1]{\mathbb{#1}}
\newcommand{\mrm}[1]{\mathrm{#1}}
\newcommand{\mtt}[1]{\mathtt{#1}}
\newcommand{\bsm}[1]{\boldsymbol{#1}}
\newcommand{\tr}{\operatorname{tr}}
\newcommand{\rank}{\operatorname{rank}}
\newcommand{\diag}[1]{\operatorname{diag}{#1}}
\newcommand{\blkdiag}[1]{\operatorname{blkdiag}{#1}}
\DeclareMathOperator*{\argmin}{arg\;min}
\DeclareMathOperator*{\maximize}{maximize}
\newcommand{\Hd}{\mbf{H}_{\mrm{d}}}
\newcommand{\hd}[1]{\mbf{h}_{\mrm{d},{#1}}}
\newcommand{\Hck}[1]{\mbf{H}_{\mrm{c},{#1}}}
\newcommand{\mcA}{\mca{A}}
\newcommand{\mcB}{\mca{B}}
\newcommand{\RIS}{\mtt{RIS}}
\newcommand{\DDone}{\mtt{DD}1}
\newcommand{\PD}{\mtt{PD}}
\newcommand{\BS}{\mtt{BS}}
\newcommand{\taup}{\tau_{\mrm{p}}}
\newcommand{\tauc}{\tau_{\mrm{c}}}
\newcommand{\TI}{\mca{T}_{1}}
\newcommand{\TII}{\mca{T}_{2}}
\renewcommand*\env@matrix[1][\arraystretch]{%
  \edef\arraystretch{#1}%
  \hskip -\arraycolsep
  \let\@ifnextchar\new@ifnextchar
  \array{*\c@MaxMatrixCols c}}
\begin{document}
%
\title{Decision-Directed Hybrid RIS Channel Estimation with Minimal Pilot Overhead}
%
%
%

\author{
Ly~V.~Nguyen and A.~Lee~Swindlehurst
\thanks{Ly V. Nguyen and A. Lee Swindlehurst are with the Center for Pervasive Communications and Computing, Henry Samueli School of Engineering, University of California, Irvine, CA, USA 92697 (e-mail: vanln1@uci.edu, swindle@uci.edu).}
}

\maketitle

\begin{abstract}
To reap the benefits of reconfigurable intelligent surfaces (RIS), channel state information (CSI) is generally required. However, CSI acquisition in RIS systems is challenging and often results in very large pilot overhead, especially in unstructured channel environments. Consequently, the RIS channel estimation problem has attracted a lot of interest and also been a subject of intense study in recent years. In this paper, we propose a decision-directed RIS channel estimation framework for general unstructured channel models. The employed RIS contains some hybrid elements that can simultaneously reflect and sense the incoming signal. We show that with the help of the hybrid RIS elements, it is possible to accurately recover the CSI with a pilot overhead proportional to the number of users. Therefore, the proposed framework substantially improves the system spectral efficiency compared to systems with passive RIS arrays since the pilot overhead in passive RIS systems is proportional to the number of RIS elements times the number of users. We also perform a detailed spectral efficiency analysis for both the pilot-directed and decision-directed frameworks. Our analysis takes into account both the channel estimation and data detection errors at both the RIS and the BS. Finally, we present numerous simulation results to verify the accuracy of the analysis as well as to show the benefits of the proposed decision-directed framework.
\end{abstract}

\begin{IEEEkeywords}
Reconfigurable intelligent surfaces, channel estimation, sensing, decision-directed, spectral efficiency analysis.
\end{IEEEkeywords}

%
\IEEEpeerreviewmaketitle

\section{Introduction}
\label{sec_introduction}
Reconfigurable intelligent surfaces (RIS) are a novel technology that has changed the conventional long-standing perspective that wireless channels are an uncontrollable part of the environment. RISs are planar arrays composed of elements whose electromagnetic reflection coefficients can be adaptively configured to shape the wireless channel in beneficial ways. As such, they can be deployed to improve the system throughput, network coverage, or energy efficiency~\cite{Basar2019Wireless,Marco2020Smart}. However, the exploitation of this channel-shaping ability generally requires RIS-related channel state information (CSI), which is challenging to obtain since the number of RIS elements can be very large, and the RIS elements are often constructed as passive devices without active radio-frequency (RF) chains or computational resources. Therefore, the RIS channel estimation problem has been a subject of intense study in the last few years~\cite{Swindlehurst2022GeneralFramework}. The literature of RIS channel estimation can be divided into two categories including structured and unstructured channel estimations. While structured channel estimation considers models that are parameterized by the angles of arrival (AoAs), angles of departure (AoDs), and complex gains of the propagation paths, unstructured channel estimation methods assume more generic channels described by arbitrary complex coefficients.

Numerous results on structured RIS channel estimation have been reported, for example in \cite{Wan2020Broadband,Ma2020Joint,Wang2020Compressed,He2021Atomic,Liu2021CascadedmmWave,Ardah2021Trice,Liu2021ADMM,Zhang2021Cascaded,Chen2022Hybrid,Wenhui2021Cascaded,Chen2023mmWave,Noh2022Training,Zhou2022Channel,Wei2021Double-Structured,Ma2021BeamSquint,Emil2022LoS,Wei2022Joint,Jianhe2023Semi}, where the sparsity property of high-frequency (e.g., millimeter-wave, or ``mmWave'') channels are exploited to reduce the pilot overhead. For example, the studies in \cite{Wan2020Broadband,Ma2020Joint,Wang2020Compressed,He2021Atomic,Liu2021CascadedmmWave,Ardah2021Trice,Liu2021ADMM,Zhang2021Cascaded,Chen2022Hybrid} formulated the cascaded mmWave channel estimation problem as a sparse signal recovery problem so that various compressive sensing techniques can be exploited to recover the channel parameters, e.g., distributed orthogonal matching pursuit (OMP)~\cite{ Wan2020Broadband }, iterative atom pruning based subspace pursuit (IAP-SP)~\cite{Ma2020Joint}, atomic norm minimization~\cite{He2021Atomic}, Newtonized orthogonal matching pursuit~\cite{Liu2021CascadedmmWave}, alternating direction method of multipliers (ADMM)~\cite{Liu2021ADMM}, and the hybrid multi-objective evolutionary paradigm~\cite{Chen2022Hybrid}. Several other system scenarios and designs were investigated in~\cite{ Wenhui2021Cascaded, Noh2022Training, Ma2021BeamSquint, Chen2023mmWave, Wei2021Double-Structured}. More specifically, the work in~\cite{Wenhui2021Cascaded} considers low-precision analog-to-digital converters (ADCs) at the BS and derives a linear channel estimator. The authors in~\cite{Noh2022Training} exploited the sparse structure of mmWave channels to derive a Cram{\'e}r-Rao lower bound (CRB) for the channel parameters, which is then optimized to design an RIS reflection pattern. The effect of beam squint was taken into account in~\cite{Ma2021BeamSquint} and a twin-stage orthogonal matching pursuit (TS-OMP) algorithm was developed to estimate the channel parameters. The double-structured angular sparsity of cascaded channels was exploited in~\cite{Chen2023mmWave, Wei2021Double-Structured} to both reduce the pilot overhead and improve the estimation performance. The work in~\cite{Emil2022LoS} developed a maximum likelihood (ML) channel estimation framework for estimating the line-of-sight (LoS) user-RIS channel. Exploiting the fact that the channel angles vary much slower than the channel gains, the authors in~\cite{Zhou2022Channel} proposed a two-timescale parametric estimation strategy which estimates all the channel angles and gains in the first coherence block, and then only re-estimates the channel gains in the remaining coherence blocks. Joint channel estimation and data detection taking into account the sparsity and low-rank structure of mmWave channels was studied in~\cite{Wei2022Joint,Jianhe2023Semi} where Taylor series expansion and Gaussian approximation were used in~\cite{Wei2022Joint} and a two-stage fitting algorithm was derived~\cite{Jianhe2023Semi}.

Unlike the aforementioned works where all the RIS elements are assumed to be passive, some other structured channel estimation studies in~\cite{Liu2020Deep,Taha2021Enabling,Chen2021LowComplexity,Jin2021DRN,Hu2022SemiPassive,Kim2023Semi-Passive} assume that the RIS contains a small number of active elements that can operate in sensing mode to estimate partial CSI, which is then exploited together with the sparsity structure of mmWave channels to reconstruct the full CSI. While compressed sensing methods were used in~\cite{Liu2020Deep,Taha2021Enabling}, some other techniques were employed in~\cite{Chen2021LowComplexity,Jin2021DRN,Hu2022SemiPassive}, e.g., signal parameters via rotational invariance technique (ESPRIT) and multiple signal classification (MUSIC) in~\cite{Chen2021LowComplexity,Hu2022SemiPassive} and deep residual networks in~\cite{Jin2021DRN}. Unlike the methods in~\cite{Liu2020Deep,Taha2021Enabling,Chen2021LowComplexity,Jin2021DRN,Hu2022SemiPassive} that require both uplink and downlink training signals, the work in~\cite{Kim2023Semi-Passive} developed a variational inference-sparse Bayesian learning channel estimator that uses only the uplink training signals and exploits the received signals at both the RIS and the BS.

On the other hand, unstructured RIS channel estimation has also been rigorously investigated in many works, e.g., literature~\cite{Mishra2019Channel,Yang2020Intelligent,Jensen2020Optimal,Zheng2020Intelligent,Nadeem2020Intelligent,You2020Channel,Demir2022Is,Gilderlan2021Channel,Liu2020Matrix,Wei2021ChannelMISO,He2020Low-Rank,Wang2020Channel,Wei2022Channel}. For single-user systems, the works in~\cite{Mishra2019Channel,Yang2020Intelligent} used a binary reflection strategy where only one reflecting element is turned on in each time slot. It was then shown in~\cite{Jensen2020Optimal, Zheng2020Intelligent} that turning on all the RIS elements at the same time and using a discrete Fourier transform (DFT) matrix as the reflecting pattern provides better performance compared to the binary reflection strategy. Similar results were also reported for the case of multiple users in~\cite{Nadeem2020Intelligent}. Additionally, the study in~\cite{You2020Channel} examines the reflecting pattern design problem while imposing the restriction that the phase shifts are limited to a finite set of discrete values. For multi-user systems, the work in~\cite{Demir2022Is} exploits known spatial correlation at both the BS and the RIS as well as other statistical characteristics of multi-specular fading to derive Bayesian channel estimators. The work in~\cite{ He2020Low-Rank} assumes a low-rank RIS-BS channel and develops a two-stage algorithm based on matrix factorization and matrix completion. Some other methods such as matrix-calibration-based factorization and parallel factor tensor decomposition were used in \cite{Liu2020Matrix} and~\cite{Gilderlan2021Channel,Wei2021ChannelMISO}, respectively. More general channel models were considered in~\cite{Wei2022Channel,Wang2020Channel} where two- and three-phase estimation approaches were proposed, respectively. While both of these latter approaches require the same pilot overhead, the two-phase approach outperforms the other thanks to the alleviation of error propagation. Joint channel estimation and data detection for unstructured channels were also studied in~\cite{Huang2022Semi,Alwakeel2022Semi}. Expectation-maximization was exploited in ~\cite{Huang2022Semi} for a single-user system, and an on/off reflection strategy was used in~\cite{Alwakeel2022Semi} for a multiuser systems.

In all of the above unstructured channel estimation methods, passive RIS arrays were used. In this paper, we consider a recent hybrid RIS structure~\cite{Alexandropoulos2021Hybrid,Alamzadeh2021Reconfigurable,Zhang2022Channel} in which the RIS elements can simultaneously reflect and sense the incoming signal, and develop a decision-directed (DD) channel estimator that can be used for unstructured channels where AoA/AoD information cannot be exploited. The novelty of the approach lies in the application of hybrid RIS for unstructured channel estimation, and the use of DD to reduce the pilot overhead. The contributions of this paper are summarized as follows:
\begin{itemize}
    \item Based on the hybrid RIS structure, we first develop a two-phase pilot-directed (PD) channel estimation approach. The estimation strategy is similar to that in~\cite{Wei2022Channel} but we show that the pilot overhead is lower for multiuser systems.
    \item Next, we propose a two-phase DD channel estimation framework and we show that with the help of the hybrid RIS elements, it is possible to accurately recover the CSI with a pilot overhead only proportional to the number of users. Therefore, the proposed DD framework substantially improves the system spectral efficiency (SE). More specifically, in the channel estimation stage, the users transmit a sequence including both pilot and data symbols where the number of pilot symbols is the same as the number of users. The RIS uses some sensor elements with RF chains to recover the data symbols, and then forwards the detected data symbols to the BS for cascaded channel estimation. For the BS to accurately estimate the CSI, the RIS phase shifts must be varied. We point out that changing the RIS phase shifts does not affect data detection by the sensing RIS elements, and thus both data recovery at the RIS and channel estimation at the BS are guaranteed. We also explain why accurate CSI recovery is not guaranteed when the DD approach is applied at the BS and the RIS has no sensing elements.
    \item We then perform a detailed spectral efficiency (SE) analysis for both the PD and DD frameworks for single-user systems. Our analysis takes into account both the channel estimation and data detection errors at the RIS and the BS, and thus accurately reflects the uncertainty of RIS-assisted data detection in the DD framework. It is observed that there is often a crossing point at which the DD framework outperforms the PD one, and so the analysis can be used to decide when the PD or DD approach should be used. Finally, we present numerous simulation results to verify the accuracy of the SE analysis as well as to show the benefits of the proposed DD framework.
\end{itemize}

The rest of this paper is organized as follows: Section~\ref{sec_system_model} presents the considered system model. The pilot directed and decision-directed channel estimation frameworks are presented in Section~\ref{sec_PD} and Section~\ref{sec_DD}, respectively. We perform the spectral efficiency analysis in Section~\ref{sec_SE}. Section~\ref{sec_numerical_results} shows simulation results and finally Section~\ref{sec_conclusion} concludes the paper.

\textit{Notation:} Upper-case and lower-case boldface letters denote matrices and column vectors, respectively. Scalars $x_{ij}$ and $[\mbf{X}]_{ij}$ both denote the element at the $i$th row and $j$th column of a matrix $\mbf{X}$. Vectors $\mbf{x}_i$ and $\mbf{X}_{:,i}$ both denote the $i$th column of a matrix $\mbf{X}$, while $\mbf{X}_{k,:}$ denotes the $k$-th row of $\mbf{X}$. The notation $\mbf{X}_{i:j,k:\ell}$ represents the sub-matrix of $\mbf{X}$ that includes rows $i$ to $j$ and columns $k$ to $\ell$. The expectation, variance, and covariance of random quantities are denoted by $\mathbb{E}[\cdot]$, $\Var{\cdot}$, and $\Cov{\cdot}$, respectively. Depending on the context, the operator $|\cdot|$ is used to denote the absolute value of a number, or the cardinality of a set. The $\ell_2$-norm of a vector is represented by $\|\cdot\|$. The transpose and conjugate transpose are denoted by $[\cdot]^T$ and $[\cdot]^H$, respectively, $j$ is the unit imaginary number satisfying $j^2=-1$, $\mathcal{N}(\cdot,\cdot)$ and $\mathcal{CN}(\cdot,\cdot)$ represent the real and the complex normal distributions respectively, where the first argument is the mean and the second argument is the variance or the covariance matrix. The $i$-th element of the set $\mcA$ is indicated by $\mcA(i)$. The Q-function that quantifies the tail distribution of a standard normal random variable is given by~$Q(\cdot)$.

\section{System Model}
\label{sec_system_model}
We consider an uplink RIS-assisted MIMO system in which a BS with $M$ antennas serves $K$ single-antenna users under the assistance of an $N$-element RIS. Let $\Hd \in \mbb{C}^{M\times K}$, $\mbf{H} \in \mbb{C}^{M\times N}$, and $\mbf{G} \in \mbb{C}^{N\times K}$ denote the direct channel from the users to the BS, the channel from the RIS to the BS, and the channel from the users to the RIS, respectively. The RIS contains a number of sensing elements equipped with radio-frequency (RF) chains as illustrated in Fig.~\ref{fig:system_model}. These sensing elements are able to simultaneously reflect and sense the impinging signal. Let $\mcA$ denote the index set of the sensing elements, so that $\mcA \subset \{1,\,\ldots,\,N\}$, and let $N_\mcA$ be the number of sensing elements, i.e., $N_\mcA = |\mcA|$, where it is assumed that $K \leq N_\mcA \ll N$. 

Define the channel matrices $\Hd \overset{\Delta}{=} [\hd{1},\,\ldots,\,\hd{K}]$ and $\mbf{G} \overset{\Delta}{=} [\mbf{g}_1,\,\ldots,\,\mbf{g}_K]$, so that the received signal at the BS is modeled as
\begin{align}
    \mbf{y}^\BS 
    &= \sqrt{P}\sum_{k=1}^K (\hd{k} +\mbf{H}\diag(\mbf{g}_k)\diag(\bsm{\rho})\bsm{\phi}) s_k + \mbf{n}^\BS\\
    &= \sqrt{P}\sum_{k=1}^K \Hck{k} \diag{\left(\begin{bmatrix}[0.8]
        1\\ \bsm{\rho}
    \end{bmatrix}\right)} \begin{bmatrix}[0.8]
        1\\ \bsm{\phi}
    \end{bmatrix} s_k + \mbf{n}^\BS
\end{align}
where $\bsm{\phi} = [\phi_1,\, \ldots,\,\phi_N]^T$ is the phase shift vector of the RIS, $\Hck{k} = [\hd{k},\, \mbf{H}\diag(\mbf{g}_k)] \in \mbb{C}^{M\times (N+1)}$ is the cascaded channel of the $k$-th user, $P$ is the transmit power, and $\bsm{\rho} \overset{\Delta}{=} [\rho_1,\,\ldots,\,\rho_N]^T$ with $0 \leq \rho_n \leq 1$ if $n \in \mcA$, otherwise $\rho_n = 1$. Hence, $\rho^2_n$ is the portion of the power of the impinging signal that is reflected by the $n$-th RIS element. For convenience, we use the notation $\bsm{\rho}^\mcA = [\rho_1^\mcA,\,\ldots,\,\rho_{N_\mcA}^\mcA]^T$  where $\rho_i^\mcA \overset{\Delta}{=} \rho_{\mcA(i)}$ for $i = 1,\,\ldots,\,N_\mcA$, and $\bsm{\eta}^\mcA = [\eta_1^\mcA,\,\ldots,\,\eta_{N_\mcA}^\mcA]^T$ where $\eta_i^\mcA = \sqrt{1 - (\rho^\mcA_i)^2}$. Hence, $(\eta_i^\mcA)^2$ represents the amount of signal power absorbed by the RIS element $\mca{A}(i)$.

With $N_\mcA$ sensing elements at the RIS, the received signal at the RIS is given as
\begin{equation}
    \mbf{{y}^\RIS} = \sqrt{P} \diag(\bsm{\eta}^\mcA) \diag( \bsm{\phi}^\mcA) \sum_{k=1}^K \mbf{g}^\mcA_{k} s_{k} + \mbf{n}^\RIS \; ,
\end{equation}
where $\mbf{g}_k^\mcA = [g_{k,1}^\mcA,\,\ldots,\,g_{k,N_\mcA}^\mcA]^T$ with $g^\mcA_{k,i} \overset{\Delta}{=} g_{k,\mcA(i)}$. In this paper, it is important to note that the superscripts $(\cdot)^\mcA$ and $(\cdot)^\mcB$ are used to imply variables that are associated with the sensing and reflecting RIS elements, respectively. 

We assume an uplink communication protocol with two stages including a channel estimation stage followed a data transmission stage. After the channel estimation stage, the RIS phase shifts are optimized and configured before the data transmission stage begins. It should be noted that during the channel estimation stage, data detection occurs at the RIS since the users transmit both pilot and data symbols during this stage. During the data detection stage, in order to minimize power consumption at the RIS, the sensing function of the hybrid RIS elements is turned off and the incoming signal is completely reflected.

\begin{figure}
    \centering
    \includegraphics[width=\linewidth]{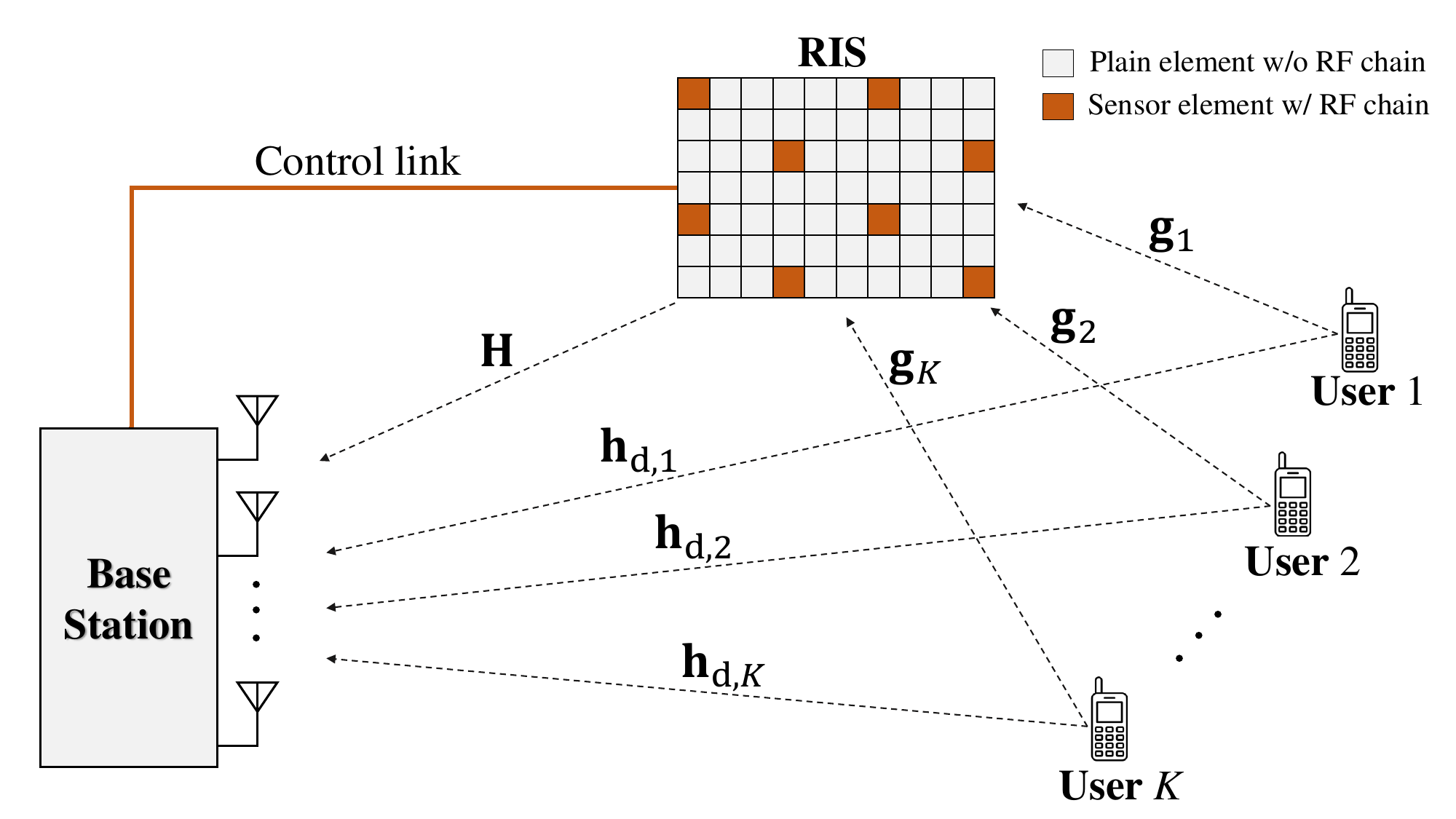}
    \caption{Sensing-RIS-assisted multi-user MIMO system.}
    \label{fig:system_model}
\end{figure}
\section{Pilot-Directed Channel Estimation}
\label{sec_PD}
In this section, we present a two-phase pilot-directed approach for estimating the cascaded channel matrices $\Hck{1},\, \ldots,\, \Hck{K}$. Since all the users experience the same RIS-BS channel, i.e. the same channel matrix $\mbf{H}$, the total number of channel elements to be estimated is $M(K + N) + N(K-1)$~\cite{Wang2020Channel,Wei2022Channel}. Let $\mbf{A}_k = \mbf{H}\diag(\mbf{g}_k)$, then we have $\mbf{A}_k = \mbf{A}_1\diag(\bsm{\lambda}_k)$ where $\lambda_{k,n} = g_{k,n}/g_{1,n}$ for $k = 2,\,\ldots,\,K$ and $n = 1,\,\ldots,\,N$. Note that $\bsm{\lambda}_1 = \mbf{1}_N$. Therefore, it suffices to estimate $\Hd$, $\mbf{A}_1$, and $\bsm{\lambda}_2,\, \ldots,\,\bsm{\lambda}_K$. 


Our two-phase estimation strategy is similar to that in~\cite{Wei2022Channel} where $\Hck{1} = [\hd{1},\, \mbf{A}_1]$ is estimated in phase~1 and $\hd{2},\, \ldots,\, \hd{K}$ and $\bsm{\lambda}_2,\, \ldots,\, \bsm{\lambda}_K$ are estimated in phase~2. However, unlike the work in~\cite{Wei2022Channel}, which considers an RIS with passive elements only, our work here considers a hybrid RIS structure as presented above. In this section, we assume that only pilot signals are used for the channel estimation. For notational convenience, let $\TI = \{1,\,\ldots,\,\tau_1\}$ and $\TII = \{\tau_1+1,\,\ldots,\,\tau_1+\tau_2\}$ where $\tau_1$ and $\tau_2$ are the length of phase~1 and phase~2, respectively.

\subsection{Phase 1}
In this phase, we estimate $\Hck{1} = [\hd{1},\,\mbf{A}_1]$. One selected user transmits a pilot vector of length $N+1$, while the other users remain idle. Without loss of generality, we set the index of the typical user to~1. The received signal at the BS in this phase is given as
\begin{equation}
    \mbf{y}_t^\BS = \sqrt{P}\Hck{1}\diag{\left(\begin{bmatrix}[0.8]
        1\\ \bsm{\rho}
    \end{bmatrix}\right)} \begin{bmatrix}[0.8]
        1\\ \bsm{\phi}_t
    \end{bmatrix} s_{1,t} + \mbf{n}_t^\BS.
\end{equation}
Since $\Hck{1}$ contains $N+1$ columns, we need at least $\tau_1 = N+1$ time slots to accurately estimate $\Hck{1}$. For simplicity, we can set the pilot vector as $\mbf{S}_{1,\TI}  = \mbf{1}_{\tau_1}^T$ and the RIS phase shift matrix $\bsm{\Phi}$ is chosen so that $[\mbf{1}_{\tau_1},\bsm{\Phi}^T]^T = \mbf{V}_{\tau_1}$ where $\mbf{V}_{\tau_1}$ is the DFT matrix of size $\tau_1\times \tau_1$. This means $[1,\bsm{\phi}_t^T]^T$ is the $t$-th column of $\mbf{V}_{\tau_1}$. Then, the cascaded channel $\mbf{H}_{\mrm{c},1}$ can be estimated via standard methods, such as for example the least-squares (LS):
\begin{align}
    \mbf{\hat{H}}_{\mrm{c},1} &= \frac{1}{\sqrt{P}\tau_1}\mbf{Y}^\BS_{:,\TI}\bsm{\Phi}^H_{\tau_1}\diag{\left(\begin{bmatrix}[0.8]
        1\\ \bsm{\rho}
    \end{bmatrix}\right)}^{-1} \notag \\
    &= \mbf{{H}}_{\mrm{c},1} + \frac{1}{\sqrt{P}\tau_1} \mbf{N}_{:,\mca{T}_1}^\BS\bsm{\Phi}^H_{\tau_1}\diag{\left(\begin{bmatrix}[0.8]
        1\\ \bsm{\rho}
    \end{bmatrix}\right)}^{-1}.
\end{align}

The received signal at the sensing elements of the RIS is
\begin{equation}
    \mbf{y}^\RIS_t = \sqrt{P} \diag(\bsm{\eta}^\mcA) \diag( \bsm{\phi}^\mcA_t) \mbf{g}^\mcA_1 + \mbf{n}^\RIS_t
    \label{eq:rx_signal_RIS_phase1}
\end{equation}
and so the sensed portion of the channel $\mbf{g}^\mcA_1$ can be estimated as
\begin{align}
    \mbf{\hat{g}}^\mcA_1 
    &=\frac{1}{\sqrt{P}\tau_1}\diag(\bsm{\eta}^\mcA)^{-1} \sum_{t=1}^{\tau_1} \bsm{\psi}_t.\label{eq:g1A_hat}
\end{align}
where $\bsm{\psi}_t = \diag(\bsm{\phi}_t^\mcA)^{-1}\mbf{y}^\RIS_t$.

\subsection{Phase 2}
In this phase, the typical user remains idle while the other users transmit pilot sequences.  Since $\hd{1}$ and $\mbf{A}_1$ have been estimated in phase 1, we will estimate $\hd{2},\, \ldots,\, \hd{K}$ and $\bsm{\lambda}_2,\, \ldots,\, \bsm{\lambda}_K$ during phase 2. The received signal at the BS can be decomposed as
\begin{align}
    \mbf{y}_t^\BS &= \sqrt{P}\sum_{k=2}^K \big(\hd{k} +\mbf{A}_1^{\mcB}\diag(\bsm{\phi}_t^{\mcB})\bsm{\lambda}_k^{\mcB} \; + \notag \\ &\hspace{1.5cm} \mbf{A}_1^{\mcA}\diag(\bsm{\rho}^{\mcA})\diag(\bsm{\phi}_t^{\mcA})\bsm{\lambda}_k^{\mcA}\big) s_{k,t} + \mbf{n}_t^\BS
    \label{eq:rx_BS_phase2}
\end{align}
where $\mbf{A}_1^{\mcA}$ and $\mbf{A}_1^{\mcB}$ are matrices whose columns are drawn from $\mbf{A}_1$ with indices $\mcA$ and $\mcB$, respectively, $\bsm{\lambda}_k^{\mcA} = [\lambda_{k,1}^{\mcA},\,\ldots,\,\lambda_{k,N_\mcA}^{\mcA}]^T$ and $\bsm{\lambda}_k^{\mcB} = [\lambda_{k,1}^{\mcB},\,\ldots,\,\lambda_{k,N_\mcB}^{\mcB}]^T$ where $\lambda_{k,i}^{\mcA} \overset{\Delta}{=} \lambda_{k,\mcA(i)}$ and $\lambda_{k,i}^{\mcB} \overset{\Delta}{=} \lambda_{k,\mcB(i)}$.

\subsubsection{Estimating $\bsm{\lambda}_2^\mcA,\, \ldots,\, \bsm{\lambda}_K^\mcA$} This is done at the RIS. Since $\lambda^\mcA_{k,i} = g_{k,i}^\mcA / g_{1,i}^\mcA$, the parameters $\bsm{\lambda}_2^\mcA,\, \ldots,\, \bsm{\lambda}_K^\mcA$ are defined as long as $\mbf{G}^\mcA = [\mbf{g}_1^\mcA,\,\ldots,\,\mbf{g}_K^\mcA]$ is known. Note that the first column of $\mbf{G}^\mcA$ has been estimated using~\eqref{eq:g1A_hat} in phase 1. The signal received at the RIS in phase 2 is given as
\begin{align}
    \mbf{y}^\RIS_t 
    &= \sqrt{P}\diag(\bsm{\eta}^\mcA)\diag(\bsm{\phi}^\mcA_t)\mbf{G}^\mcA_{:,2:K}\mbf{S}_{2:K,t} + \mbf{n}^\RIS_t.
\end{align}
The sub-matrix $\mbf{G}^\mcA_{:,2:K}$ can be estimated by the RIS as follows:
\begin{equation}
    \mbf{\hat{G}}^\mcA_{:,2:K} = \frac{1}{\sqrt{P}}\diag(\bsm{\eta}^\mcA)^{-1}\bsm{\Psi}_{\TII}\mbf{S}_{2:K,\TII}^H(\mbf{S}_{2:K,\TII}\mbf{S}_{2:K,\TII}^H)^{-1}
\end{equation}
where $\bsm{\Psi}_{\TII} = [\bsm{\psi}_{\tau_1+1},\,\ldots,\,\bsm{\psi}_{\tau_1+\tau_2}]$. Thus, an estimate of ${\lambda}^\mcA_{k,i}$ can be obtained as
$\hat{\lambda}^\mcA_{k,i} = {\hat{g}_{k,i}^\mcA}/{\hat{g}_{1,i}^\mcA}$.

\subsubsection{Estimating $\hd{2},\, \ldots,\, \hd{K}$ and $\bsm{\lambda}^\mcB_2,\, \ldots,\, \bsm{\lambda}^\mcB_K$} This is accomplished at the BS. Let 
\begin{align*}
    \mbf{B}_t &= [\mbf{I}_M,\, \mbf{A}_1^\mcB\diag(\bsm{\phi}^\mcB_t)],\\
    \bsm{\upsilon}_k &= [\hd{k}^T,\, (\bsm{\lambda}_k^\mcB)^T]^T,\\
    \mbf{f}_{k,t}^\mcA &= \mbf{A}_1^{\mcA}\diag(\bsm{\rho}^{\mcA})\diag(\bsm{\phi}_t^{\mcA})\bsm{\lambda}_k^{\mcA},
\end{align*}
then the received signal at the BS in~\eqref{eq:rx_BS_phase2} can be written in the following form:
\begin{align}
    \mbf{y}_t^\BS &= \sqrt{P}\sum_{k=2}^K (\mbf{B}_t \bsm{\upsilon}_k + \mbf{f}_{k,t}^\mcA) s_{k,t} + \mbf{n}_t^\BS\notag\\
    &= \sqrt{P}((\mbf{S}_{2:K,t}^T\otimes\mbf{B}_t)\bsm{\upsilon} + \mbf{F}_t^\mcA\mbf{S}_{2:K,t}) + \mbf{n}_t^\BS\notag\\
    &= \sqrt{P}(\mbf{Q}_t\bsm{\upsilon} + \mbf{\tilde{y}}^\BS_t) + \mbf{n}_t^\BS
    \label{eq:rx_BS_phase2_rewrite}
\end{align}
where $\bsm{\upsilon} = [\bsm{\upsilon}_2^T,\,\ldots,\,\bsm{\upsilon}_K^T]^T$, $\mbf{Q}_t = \mbf{S}_{2:K,t}^T\otimes\mbf{B}_t$, $\mbf{F}_t^\mcA = [\mbf{f}_{2,t}^\mcA,\,\ldots,\,\mbf{f}_{K,t}^\mcA]$, and $\mbf{\tilde{y}}^\BS_t = \mbf{F}_t^\mcA\mbf{S}_{2:K,t}$. Stacking the received signals $\{\mbf{y}_t^\BS\}$ in~\eqref{eq:rx_BS_phase2_rewrite} with $t \in \mca{T}_2$ on top of each other, we have the following 
\begin{equation}
    \mrm{vec}\left(\mbf{Y}^\BS_{:,\TII}\right) - \sqrt{P}\,\mrm{vec}\left(\mbf{\tilde{Y}}^\BS_{:,\TII}\right)= \sqrt{P} \mbf{Q}\bsm{\upsilon} + \mbf{n}^\BS
\end{equation}
where $\mbf{Q} = [\mbf{Q}_{\tau_1+1}^T,\,\ldots,\,\mbf{Q}_{\tau_1+\tau_2}^T]^T$.
Note that $\bsm{\upsilon}$ is the vector we need to estimate and the size of $\mbf{Q}$ is $M\tau_2\times(K-1)(M+N-N_\mcA)$. Therefore, in order to accurately recover $\bsm{\upsilon}$, two conditions should be satisfied: $M\tau_2 \geq (K-1)(M+N-N_\mcA)$ and $\mrm{rank}(\mbf{Q}) = M+N-N_\mcA$. An estimate of $\bsm{\upsilon}$ can be then obtained as
\begin{equation}
    \bsm{\hat{\upsilon}} = \frac{1}{\sqrt{P}}\mbf{Q}^{\dagger}\left(\mrm{vec}\left(\mbf{Y}^\BS_{:,\TII}\right) - \sqrt{P}\,\mrm{vec}\left(\mbf{\tilde{Y}}^\BS_{:,\TII}\right)\right).\label{eq:upsilon_hat}
\end{equation} 
If $M \geq N - N_\mcA$, we need at least $\tau_2 = 2(K-1)$ time slots to recover $\bsm{\upsilon}$ and if $M < N - N_\mcA$, we need at least $\tau_2 = K - 1 + \left \lceil \frac{(K-1)(N-N_\mcA)}{M} \right \rceil$ time slots to recover $\bsm{\upsilon}$.

When $M \geq N - N_\mcA$, user-$k$ transmits in two time slots with $\tau_1 + 2(k-2)+1$ and $\tau_1 + 2(k-2)+2$, and the estimation solution in~\eqref{eq:upsilon_hat} can be decomposed into $K-1$ separate expressions as follows:
\begin{align}
    \bsm{\hat{\upsilon}}_k &= \frac{1}{\sqrt{P}}\begin{bmatrix}[1.15]
        s_{k,\tau_1+2(k-2)+1} \mbf{B}_{\tau_1+2(k-2)+1} \\
        s_{k,\tau_1+2(k-2)+2} \mbf{B}_{\tau_1+2(k-2)+2} 
    \end{bmatrix}^{\dagger}\times \notag\\
    &\qquad\left(\begin{bmatrix}[1.15]
        \mbf{y}^\BS_{\tau_1+2(k-2)+1}\\
        \mbf{y}^\BS_{\tau_1+2(k-2)+2}
    \end{bmatrix} - \sqrt{P}\begin{bmatrix}[1.15]
        \mbf{\tilde{y}}^\BS_{\tau_1+2(k-2)+1}\\
        \mbf{\tilde{y}}^\BS_{\tau_1+2(k-2)+2}
    \end{bmatrix}\right).
\end{align}

\subsection{Overall Training Overhead}
Since the pilot overhead in phase 1 is $\tau_1 = N + 1$ and in phase 2 is $\tau_2 = 2(K-1)$ if $M \geq N - N_\mcA$, or $\tau_2 = K - 1 + \left \lceil \frac{(K-1)(N-N_\mcA)}{M} \right \rceil$ otherwise, the total pilot overhead will be
\begin{align}
    \taup &= \tau_1 + \tau_2 \notag \\
    &= \begin{cases}
        N+2K-1, &\text{if} \; M \geq N - N_\mcA,\\
        N+K+\left\lceil\frac{(K-1)(N-N_\mcA)}{M}\right\rceil, &\text{otherwise.}
    \end{cases}
    \label{eq:taup_PD2}
\end{align}
Although the total pilot overhead is less than $K(N+1)$ as required by the generic LS channel estimator at the BS, it is still proportional to the number of RIS elements $N$, which can be excessively large. In the following section, we propose a DD channel estimation approach whose pilot overhead is only proportional to the number of users $K$.

\section{Decision-Directed Channel Estimation}
\label{sec_DD}
In this section, we propose a two-stage DD channel estimation approach to substantially reduce the pilot overhead, and thus improve the system SE. As in the PD approach above, there are also two phases as detailed in what follows.

\subsection{Phase 1}
Similar to the PD approach, here we also use $\tau_1 = N+1$ time slots to estimate $\Hck{1}$ with the typical user active and the other users remaining idle. However, unlike the PD approach where all of the $N+1$ time slots are used for pilot signals, the DD approach uses only the first time slot to transmit a pilot symbol and the remaining $N$ time slots are for transmitting data symbols.

\subsubsection{Estimating $\mbf{g}^\mcA_1$}
The received signal at the RIS in the first time slot with the pilot symbol $s_{1,1}=1$ is given in~\eqref{eq:rx_signal_RIS_phase1}, and so an estimate of $\mbf{g}^\mcA_1$ can be obtained as
\begin{equation}
    \mbf{\hat{g}}^\mcA_1 = \frac{1}{\sqrt{P}} \diag(\bsm{\eta}^\mcA)^{-1} \diag( \bsm{\phi}^\mcA_1)^{-1} \mbf{y}^{\RIS}_1.
    \label{eq:g1A_hat_DD}
\end{equation}

\subsubsection{Data Detection for user 1 in time slots $2, \, \ldots,\, \tau_1$} The received signal is $\mbf{y}^\RIS_t = \sqrt{P}\diag(\bsm{\eta}^\mcA) \diag(\bsm{\phi}^\mcA_t) \mbf{{g}}^\mcA_1 s_{1,t} + n_t^\RIS$. Therefore, the data symbols $s_{1,t}$ can be detected by the RIS using $\mbf{\hat{g}}^\mcA_1$ from~\eqref{eq:g1A_hat_DD} as follows:
\begin{equation}
    \hat{s}_{1,t} = \argmin_{s \in \mca{S}} \big\|\mbf{y}^\RIS_t - \sqrt{P}\diag(\bsm{\eta}^\mcA) \diag(\bsm{\phi}^\mcA_t) \mbf{\hat{g}}^\mcA_1 s\big\|^2.
    \label{eq:s_hat_phase1}
\end{equation}
It can be seen that even when the phase shift vector $\bsm{\phi}^\mcA_t$ varies in different time slots, it is still possible for the RIS to accurately recover the data symbols since the effect of $\bsm{\phi}^\mcA_t$ is merely a phase rotation of the noiseless received signal which can be easily taken into account as in~\eqref{eq:s_hat_phase1}.

\subsubsection{Estimating $\Hck{1}$}
The detected data symbols $\{\hat{s}_{1,t}\}$ in~\eqref{eq:s_hat_phase1} will be forwarded by the RIS to the BS to estimate the cascaded channel matrix $\mbf{H}_{\mrm{c},1}$ as follows:
\begin{equation}
    \mbf{\hat{H}}_{\mrm{c},1} = \frac{1}{\sqrt{P}\tau_1}\mbf{Y}^\BS_{:,\TI}\diag{\left(\mbf{\hat{s}}_{\TI}\right)}^{-1}\bsm{\Phi}^H_{\tau_1}\diag{\left(\begin{bmatrix}[0.8]
        1\\ \bsm{\rho}
    \end{bmatrix}\right)}^{-1}.
\end{equation}
where $\mbf{\hat{s}}_{\TI} = [1,\, \hat{s}_{1,2},\, \ldots,\, \hat{{s}}_{1,\tau_1}]$.
Thus, in phase 1, with the help of the sensing elements, we use only one time slot for pilot signalling, i.e., $\tau_1 = 1$, while the other $N$ time slots are used for data transmission. The BS is still able to accurately recover the cascaded channel matrix $\Hck{1}$ as long as the data symbols are correctly detected by the RIS.

\subsection{Phase 2}
We divide phase 2 into two sub-phases that we refer to as 2a and 2b. Sub-phase 2a is associated with the time frame $\mca{T}_{\mrm{2a}} = \tau_1+1,\,\ldots,\,\tau_1+K-1$ where user 1 is idle and users $2$ through $K$ transmit their pilot signals. Sub-phase 2b is associated with the time frame $\mca{T}_{\mrm{2b}} = \tau_1+K,\,\ldots,\,\tau_1+\tau_2$ where all the users transmit data symbols.

\subsubsection{Estimating $\mbf{g}^\mcA_2,\,\ldots,\,\mbf{g}^\mcA_K$ and $\bsm{\lambda}_2^\mcA,\, \ldots,\, \bsm{\lambda}_K^\mcA$}
Pilot signals are transmitted in the first $K-1$ time slots of phase 2, from $\tau_1 + 1$ to $\tau_1 + K-1$, so the sub-matrix $\mbf{G}^\mcA_{:,2:K} = [\mbf{g}^\mcA_2,\,\ldots,\,\mbf{g}^\mcA_K]$ can be estimated by the RIS as follows:
\begin{equation}
    \mbf{\hat{G}}^\mcA_{:,2:K} = \frac{\diag(\bsm{\eta}^\mcA)^{-1}\bsm{\Psi}_{\mca{T}_{\mrm{2a}}}\mbf{S}_{2:K,\mca{T}_{\mrm{2a}}}^H(\mbf{S}_{2:K,\mca{T}_{\mrm{2a}}}\mbf{S}_{2:K,\mca{T}_{\mrm{2a}}}^H)^{-1}}{\sqrt{P}},
\end{equation}
where $\bsm{\Psi}_{\mca{T}_{\mrm{2a}}} = [\bsm{\psi}_{\tau_1+1},\,\ldots,\,\bsm{\psi}_{\tau_1+K-1}]$. Furthermore, an estimate of ${\lambda}^\mcA_{k,i}$ can be obtained as
$\hat{\lambda}^\mcA_{k,i} = {\hat{g}_{k,i}^\mcA}/{\hat{g}_{1,i}^\mcA}$.

\subsubsection{Detecting data}
For the remaining time slots from $\tau_1+K$ to $\tau_1 + \tau_2$ in sub-phase 2b, all $K$ users can transmit data, and the received signal at the RIS is
\begin{equation}
    \mbf{{y}}^\RIS_t = \sqrt{P} \diag(\bsm{\eta}^\mcA) \diag( \bsm{\phi}^\mcA_t) \mbf{G}^\mcA \mbf{s}_{t} + \mbf{n}^\RIS_t.
\end{equation}
The RIS can use $\mbf{y}^\RIS_t$ and $\mbf{\hat{G}}^\mcA = [\mbf{\hat{g}}^\mcA_1,\, \ldots,\, \mbf{\hat{g}}^\mcA_K]$ to detect the users' transmitted data $\mbf{s}_t$, which is a conventional MIMO data detection problem. Similarly, the effect of $\bsm{\phi}^\mcA_t$ is merely the phase rotation of the noiseless received signal, and so its it is feasible for the RIS to accurately recover the data symbols as $\bsm{\phi}^\mcA_t$ varies in time.

\subsubsection{Estimating $\hd{2},\, \ldots,\, \hd{K}$ and $\bsm{\lambda}^\mcB_2,\, \ldots,\, \bsm{\lambda}^\mcB_K$} 
Since the typical user also transmits data during sub-phase 2b, the received signal at the BS can be re-written in the following form:
\begin{align}
    &\mbf{y}_t^\BS - \Hck{1}\diag{\left(\begin{bmatrix}[0.8]
    1\\ \bsm{\rho}
\end{bmatrix}\right)}\begin{bmatrix}[0.8]
    1\\ \bsm{\phi}_t
\end{bmatrix}s_{1,t} \notag \\
&\qquad = \sqrt{P}\sum_{k=2}^K (\mbf{B}_t \bsm{\upsilon}_k + \mbf{f}_{k,t}^\mcA) s_{k,t} + \mbf{n}_t^\BS\notag\\
    &\qquad =\sqrt{P}((\mbf{S}_{2:K,t}^T\otimes\mbf{B}_t)\bsm{\upsilon} + \mbf{F}_t^\mcA\mbf{S}_{2:K,t}) + \mbf{n}_t^\BS\notag\\
    &\qquad =\sqrt{P}(\mbf{Q}_t\bsm{\upsilon} + \mbf{\tilde{y}}^\BS_t) + \mbf{n}_t^\BS,
    \label{eq:rx_BS_phase2_rewrite}
\end{align}
where $t \in \TII$. Note that $s_{1,t} = 0$ for for $t \in \mca{T}_{\mrm{2a}}$ since user~1 is idle during  sub-phase~2a. Then, we can use a similar technique as in the PD approach for estimating $\bsm{\upsilon}$, but we need to replace $s_{k,t}$ with $\hat{s}_{k,t}$ for $t \in \mca{T}_{2b}$.

\subsection{Overall Training Overhead}
The overall training overhead for the proposed decision-directed approach is $\tau_p = K$ since phase 1 and phase 2 require only 1 and $K-1$ time slots for pilot signalling, respectively. The BS is guaranteed to accurately recover the channel matrices as long as the data symbols are correctly detected by the RIS. Although only the typical user transmits in phase 1 and thus the spectral efficiency will not be as large as if all the users were transmitting, we will show in the numerical results that the proposed DD approach can still result in an increase in the spectral efficiency compared with the PD approach.

\subsection{Comparison with a Passive RIS and DD at the BS}
In the proposed DD method, the BS can accurately recover the cascaded channel matrices when the data symbols are correctly detected by the RIS. Here, we explain why an alternative scenario in which the RIS has no sensing elements and the DD strategy is applied at the BS cannot guarantee accurate CSI estimation. To show this, it is enough to consider the case with only one user, where the received signal at the BS is given as
\begin{equation}
    \mbf{y}^\BS_t = \Hck{1}\bsm{\phi}_ts_t + \mbf{n}_t^\BS. \label{eq:rx_BS_1user}
\end{equation}
To accurately recover $\Hck{1}$, the phase shift vector $\bsm{\phi}_t$ must vary for different time slots $t$ in order to make $\bsm{\Phi}_{\TI} = [\bsm{\phi}_1,\,\ldots,\,\bsm{\phi}_{\tau_1}]$ full-rank. However, if we change $\bsm{\phi}_t$, the effective channel $\mbf{f}_t = \Hck{1}\bsm{\phi}_t$ changes as well. A pilot signal is sent in the first time slot, and the BS can only estimate the effective channel $\mbf{f}_1$. With an estimate of $\mbf{f}_1$, the BS cannot guarantee correct detection of data symbols in subsequent time slots $2,\,\ldots,\,\tau_1$ since the effective channels in these time slots are different from $\mbf{f}_1$ due the change in $\bsm{\phi}_t$. For the effective channel to remain unchanged, the phase shift vector $\bsm{\phi}_t$ must be time-invariant, but in this case the matrix $\bsm{\Phi}_{\TI}$ would be rank-1, which prevents recovery of the cascaded channel $\Hck{1}$. This is unlike the proposed DD channel estimation framework presented above where the data symbols can be accurately recovered by the RIS even when the phase shift vector of the RIS changes in time.

\section{Spectral Efficiency Analysis}
\label{sec_SE}
In this section, we perform a spectral efficiency analysis for both the PD and DD approaches. We consider an RIS-aided system where the BS has one antenna serving one user without a direct channel. We assume $D$-PSK data signalling, i.e., $s \in \mca{S} = \{\exp{\left(j\pi\frac{2\ell+1}{D}\right)}\}$ for $\ell \in \{0,\,\ldots,\,D-1\}$ with the Gray code mapping data bits to data symbols. We also assume that the data symbols are equally likely. Let $a_i = h_ig_i$ be the cascaded channel where $g_i$ and $h_i$ are the channels from the user and the BS to the $i$-th element of the RIS, respectively. It is assumed that $g_i \sim \mca{CN}(0, \sigma_g^2)$ and $h_i \sim \mca{CN}(0, \sigma_h^2)$ are independent of each other. Let $\mbf{a} = [a_1,\,\,\ldots,\,\,a_N]^T$ so that the received signal at the BS can be written as
\begin{equation}
    y^\BS = \sqrt{P}\mbf{a}^H\bsm{\phi}s + n^\BS.
\end{equation}
Let $\mbf{\hat{a}} = \mbf{a} + \bsm{\epsilon}$ be the estimated cascaded channel where $\bsm{\epsilon} = [\epsilon_1,\,\ldots,\,\epsilon_N]^T$ is the channel estimation error.
Given the $\mbf{\hat{a}}$, the RIS coefficients $\bsm{\phi}$ in the data transmission phase are chosen to maximize the effective channel strength, i.e.,
\begin{equation*}
    \maximize_{\{\bsm{\phi}\}}
    \;|\mbf{\hat{a}}^H\bsm{\phi}|^2\;
     \operatorname{subject\ to}
    \; |\phi_i| \leq 1 \; \forall i = 1,\,\ldots,\, N,
\end{equation*}
which has the optimal solution $\phi_i^\star = e^{j\measuredangle(\hat{a}_i)}$. The received signal at the BS in the data transmission phase will then be
\begin{align}
    y^\BS &= \sqrt{P}\sum_{i=1}^N a_i^*e^{j\measuredangle(a_i + \epsilon_i)}s + n^\BS = \sqrt{P}\sum_{i=1}^N z_i s + n^\BS\notag
    \label{eq:ana1}
\end{align}
where $z_i \overset{\Delta}{=} a_i^*e^{j\measuredangle(a_i + \epsilon_i)}$. Thus we have
\begin{align*}
    z_{i,\Re} &\overset{\Delta}{=} \Re\{z_i\} = \frac{|a_{i}|^2 + \Re\{a_i\epsilon_i^*\}}{\sqrt{|a_i|^2 + 2\Re\{a_i\epsilon_i^*\} + |\epsilon_i|^2}}, \\
    z_{i,\Im} &\overset{\Delta}{=} \Im\{z_i\}= \frac{\Im\{a_i\epsilon_i^*\}}{\sqrt{|a_i|^2 + 2\Re\{a_i\epsilon_i^*\} + |\epsilon_i|^2}}.
\end{align*}



\subsection{Pilot-Directed}
For the pilot-directed method, the SE is given as
\begin{equation}
    \mrm{SE}_{\mtt{PD}} = \frac{\tauc - \taup}{\tauc}\left(1-\mrm{BER}_{\mtt{PD}}\right)\log_2(D),
\end{equation}
where $\tauc$ and $\taup$ are the lengths of the coherence block and the pilot sequence, respectively. In the PD approach, the first $N$ time slots are used for channel estimation (i.e., $\taup = N$), and we assume without loss of generality that the pilot signal is an all-ones vector. Assuming a DFT matrix of size $N$ is used to configure the RIS phase shifts during the channel estimation phase, we have that $\epsilon_i \sim \mca{CN}\big(0, \frac{N_0^\BS}{PN}\big)$. The CSI errors $\{\epsilon_i\}$ are also i.i.d.. and uncorrelated with $a_i$. We will compute the PD bit error rate $\mrm{BER}_{\mtt{PD}}$, which requires the distribution of the effective channel $f = \sum_{i=1}^N z_i$.

We first obtain the following approximate means
\begin{align}
    \mu_{z_{i,\Re}} \overset{\Delta}{=} \E{z_{i,\Re}} &\approx \frac{\E{|a_{i}|^2} + \Re\{\E{a_i\epsilon_i^*}\}}{\sqrt{\E{|a_i|^2} + 2\Re\{\E{a_i\epsilon_i^*}\} + \E{|\epsilon_i|^2}}} \notag \\
    &= \frac{\sigma_a^2}{\sqrt{\sigma_a^2+\sigma_\epsilon^2}}\\
    \mu_{z_{i,\Im}} \overset{\Delta}{=} \E{z_{i,\Im}} &\approx \frac{\Im\{\E{a_i\epsilon_i^*}\}}{\sqrt{\E{|a_i|^2} + 2\Re\{\E{a_i\epsilon_i^*}\} + \E{|\epsilon_i|^2}}} = 0
\end{align}
and variances
\begin{align}
    \sigma_{z_{i,\Re}}^2 &\overset{\Delta}{=} \mrm{Var}[z_{i,\Re}] = \mbb{E}[z_{i,\Re}^2] - \mbb{E}[z_{i,\Re}]^2 \approx \frac{7\sigma_a^4 + \sigma_a^2\sigma_\epsilon^2}{2(\sigma_a^2 + \sigma_\epsilon^2)}\\
    \sigma_{z_{i,\Im}}^2 &\overset{\Delta}{=} \mrm{Var}[z_{i,\Im}] = \mbb{E}[z_{i,\Im}^2] - \mbb{E}[z_{i,\Im}]^2 \approx \frac{\sigma_a^2\sigma_\epsilon^2}{2(\sigma_a^2 + \sigma_\epsilon^2)}
\end{align}
where $\sigma_a^2 \overset{\Delta}{=} \sigma_g^2\sigma_h^2$ and we have used the following results:
\begin{align}
    \E{a_{i,\Re}^2} &= \E{(h_{i,\Re} g_{i,\Re} - h_{i,\Im} g_{i,\Im})^2} = \frac{1}{2}\sigma_h^2\sigma_g^2 = \frac{1}{2}\sigma_a^2,\notag\\
    \E{a_{i,\Re}^4} &= \E{(h_{i,\Re} g_{i,\Re} - h_{i,\Im} g_{i,\Im})^4} = \frac{3}{2}\sigma_h^4\sigma_g^4 = \frac{3}{2}\sigma_a^4.\notag
\end{align}
It can be seen that the means and variances $\mu_{z_{i,\Re}}$, $\mu_{z_{i,\Im}}$, $\sigma_{z_{i,\Re}}^2$, and $\sigma_{z_{i,\Im}}^2$ above are the same for different indices $i$. Therefore, for convenience in the rest of the PD-SE analysis, we drop the index $i$ for these values.

Since the $\{a_i\}$ and $\{\epsilon_i\}$ are i.i.d., then the $\{z_i\}$ are also i.i.d.. Using the central-limit theorem, for large $N$ we have $f_\Re = \sum_{i=1}^N z_{i,\Re} \sim \mca{CN}(N\mu_{r_\Re}, N\sigma_{z_\Re}^2)$ and $f_\Im = \sum_{i=1}^N z_{i,\Im} \sim \mca{CN}(N\mu_{z_\Im}, N\sigma_{z_\Im}^2)$. Note that we have $\Cov{f_\Re, f_\Im} = 0$ since the $\{z_i\}$ are i.i.d, and $\Cov{z_{i,\Re}, z_{i,\Im}} = 0$.

Let $\Tilde{y} = ys^* = \sqrt{P}f + \tilde{n}$ be the rotated received signal, and define $r_{\tilde{y}} = \sqrt{\tilde{y}_{\Re}^2 + \tilde{y}_{\Im}^2}$ and $\theta_{\tilde{y}} = \measuredangle (\tilde{y}) = \arctan(\tilde{y}_{\Im}/\tilde{y}_{\Re}) $. Then the joint pdf of $r_{\tilde{y}}$ and $\theta_{\tilde{y}}$ is given as
\begin{align}
    &p(r_{\tilde{y}},\theta_{\tilde{y}}) = \frac{r_{\tilde{y}}}{2\pi\sqrt{(PN\sigma^2_{z_\Re} + N_0/2)(PN\sigma^2_{z_\Im} + N_0/2)}} \times \notag \\
    &\exp \left\{-\frac{1}{2} \left[\frac{(r_{\tilde{y}}\cos \theta_{\tilde{y}}-\sqrt{P}N\mu_{z_\Re})^2}{PN\sigma_{z_\Re}^2 + N_0/2} + \frac{(r_{\tilde{y}}\sin \theta_{\tilde{y}})^2}{PN\sigma_{z_\Im}^2 + N_0/2}\right] \right\}.
    \label{eq_PD_pdf}
\end{align}
Since the data symbols are equally likely, to compute the BER, we can assume that any one of the data symbols was transmitted, and we choose $\mca{S}(0)$. The probability that the detected symbol is $\mca{S}({\ell})$ given $\mca{S}(0)$ was transmitted is
\begin{align}
    p_{\ell}^\PD &\overset{\Delta}{=} \mbb{P}[\hat{s} = \mca{S}({\ell})\mid s = \mca{S}(0)] \notag \\
    &= \int_{0}^{\infty}\int_{\frac{(2{\ell}-1)\pi}{D}}^{\frac{(2{\ell}+1)\pi}{D}} p(r_{\tilde{y}},\theta_{\tilde{y}}) d\theta_{\tilde{y}} dr_{\tilde{y}}.
    \label{eq_PD_p}
\end{align}
Thus, the BER is given as
\begin{equation}
    \mrm{BER}_{\mtt{PD}} = \frac{1}{\log_2(D)}\sum_{\ell=1}^{D}p_{\ell}^\PD e_{\mrm{bit}}(0,\ell),
    \label{eq_BER_PD}
\end{equation}
where $e_{\mrm{bit}}(0,\ell)$ is the number of bit differences between symbols $\mca{S}(0)$ and $\mca{S}(\ell)$.

\begin{theorem}
At high SNR, $\mrm{BER}_{\mtt{PD}}$ can be approximated as
    \begin{align}
    &\mrm{BER}_{\mtt{PD}} \approx \notag\\
    &\quad\frac{2}{\log_2(D)}Q\left(\frac{\pi\sqrt{P}N \sigma_a \tan \theta}{4\sqrt{\left(PN\left(1 -\frac{\pi^2}{16}\right)\sigma_a^2+\frac{N_0}{2}\right)\tan^2\theta +\frac{N_0}{2}}}\right).
    \label{eq_BER_PD_high_SNR}
\end{align}
\label{theorem_1}
\end{theorem}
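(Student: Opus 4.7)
The plan is to attack~\eqref{eq_BER_PD} by reducing the $(D-1)$-term sum to its two dominant contributions and then evaluating the nearest-neighbor confusion probability $p_1^\PD$ as a Gaussian tail. First, I would argue that under Gray coding only the immediate neighbors $\ell=1$ and $\ell=D-1$ matter at high SNR. The decision region for $\mca{S}(0)$ is the sector $\theta_{\tilde{y}}\in(-\pi/D,\pi/D]$; confusions with any $\mca{S}(\ell)$ with $\ell\neq 1,D-1$ require $\theta_{\tilde{y}}$ to land in a sector whose minimum angular distance from $0$ is at least $3\pi/D$, and by the CLT-based Gaussian description of $(f_\Re,f_\Im)$ these probabilities are exponentially smaller than $p_1^\PD$ (standard Chernoff bound on the joint Gaussian). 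Gray coding gives $e_{\mathrm{bit}}(0,1)=e_{\mathrm{bit}}(0,D-1)=1$, and the density~\eqref{eq_PD_pdf} is symmetric in $\theta_{\tilde{y}}\mapsto-\theta_{\tilde{y}}$ because $\mu_{z_\Im}=0$ and $\mathrm{Cov}[f_\Re,f_\Im]=0$, so $p_1^\PD=p_{D-1}^\PD$. Consequently \eqref{eq_BER_PD} collapses to $\mathrm{BER}_\PD\approx 2\,p_1^\PD/\log_2(D)$.

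Second, I would replace the angular-sector integral in~\eqref{eq_PD_p} by a half-plane event. At high SNR $\tilde{y}_\Re>0$ with probability essentially one, so $\theta_{\tilde{y}}>\pi/D$ is equivalent to $U:=\tilde{y}_\Im-\tan\theta\,\tilde{y}_\Re>0$ with $\theta=\pi/D$. From the jointly Gaussian description of $(\tilde{y}_\Re,\tilde{y}_\Im)$ with means $(\sqrt{P}N\mu_{z_\Re},0)$, variances $(PN\sigma_{z_\Re}^2+N_0/2,\,PN\sigma_{z_\Im}^2+N_0/2)$ and zero covariance, $U$ is scalar Gaussian with mean $-\tan\theta\,\sqrt{P}N\mu_{z_\Re}$ and variance $\tan^2\theta\,(PN\sigma_{z_\Re}^2+N_0/2)+(PN\sigma_{z_\Im}^2+N_0/2)$, giving
\begin{equation*}
p_1^\PD\approx Q\!\left(\frac{\tan\theta\,\sqrt{P}\,N\,\mu_{z_\Re}}{\sqrt{\tan^2\theta\,(PN\sigma_{z_\Re}^2+N_0/2)+(PN\sigma_{z_\Im}^2+N_0/2)}}\right).
\end{equation*}

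Third, I would evaluate the three moments in the high-SNR limit $\sigma_\epsilon^2=N_0/(PN)\to 0$. Here $z_i=a_i^*e^{j\measuredangle(a_i+\epsilon_i)}\to|a_i|$ is real, and the key observation is that $|a_i|=|h_i|\,|g_i|$ is a \emph{product} of two independent Rayleigh variables rather than a single Rayleigh. Hence $\mu_{z_\Re}\to\E{|h_i|}\E{|g_i|}=(\sqrt{\pi}\sigma_h/2)(\sqrt{\pi}\sigma_g/2)=(\pi/4)\sigma_a$, $\E{z_\Re^2}\to\E{|h_i|^2}\E{|g_i|^2}=\sigma_a^2$, so $\sigma_{z_\Re}^2\to(1-\pi^2/16)\sigma_a^2$, and $\sigma_{z_\Im}^2\to 0$. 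Substituting these limits into the Q-function argument and multiplying by the prefactor $2/\log_2(D)$ from Step 1 yields exactly~\eqref{eq_BER_PD_high_SNR}. The main obstacle is in this last step: the coarse approximations used to write $\mu_{z_\Re}\approx\sigma_a^2/\sqrt{\sigma_a^2+\sigma_\epsilon^2}$ and $\sigma_{z_\Re}^2\approx(7\sigma_a^4+\sigma_a^2\sigma_\epsilon^2)/(2(\sigma_a^2+\sigma_\epsilon^2))$ earlier in the section are \emph{not} tight enough to produce the $\pi/4$ and $1-\pi^2/16$ coefficients; one has to bypass them in the limit $\epsilon\to 0$ and instead use the exact product-of-Rayleighs moments. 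A secondary, more routine obstacle is controlling the non-adjacent-symbol error probabilities neglected in Step 1, which is done by a Chernoff bound on the jointly Gaussian $(\tilde{y}_\Re,\tilde{y}_\Im)$ against half-planes at angular offset $3\pi/D$ or more.
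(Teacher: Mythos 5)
Your proposal is correct and follows essentially the same route as the paper's proof: the paper likewise reduces the error event to the two rotated decision boundaries $\tilde{y}_\Re\tan\theta \mp \tilde{y}_\Im \le 0$, evaluates each as a scalar Gaussian tail with exactly the mean and variance you derive, and then passes to the high-SNR limit using the exact product-of-Rayleighs moments $\E{|a_i|}=\frac{\pi}{4}\sigma_a$ and $\E{|a_i|^2}-\E{|a_i|}^2=\left(1-\frac{\pi^2}{16}\right)\sigma_a^2$ together with $\sigma_{z_\Im}^2\to 0$ and $\mrm{BER}\approx\mrm{SER}/\log_2(D)$ for Gray coding. Your observation that the earlier Taylor-based approximations of $\mu_{z_\Re}$ and $\sigma_{z_\Re}^2$ must be bypassed in the $\epsilon\to 0$ limit is exactly the step the paper takes in its appendix.
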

\begin{proof}
See Appendix~\ref{appendia_1}.
\end{proof}

\subsection{Decision-Directed}
The SE of the decision-directed approach is given as
\begin{equation}
    \mrm{SE}_{\mtt{DD}} = \left(\frac{\tau_{\mrm{d},1}(1-\mrm{BER}_{\mtt{DD}1}) + \tau_{\mrm{d},2}(1-\mrm{BER}_{\mtt{DD}2})}{\tauc}\right)\log_2(D)
\end{equation}
where $\tau_{\mrm{d},1}$ and $\mrm{BER}_{\mtt{DD}1}$ are the data transmission length and the BER in the channel estimation stage. Similarly, $\tau_{\mrm{d},2}$ and $\mrm{BER}_{\mtt{DD}2}$ are the data transmission length and the BER in the data transmission stage. Thus, for the DD spectral analysis analysis, we need to compute $\mrm{BER}_{\mtt{DD}1}$ and $\mrm{BER}_{\mtt{DD}2}$ to obtain $\mrm{SE}_{\mtt{DD}}$. While $\mrm{BER}_{\mtt{DD}1}$ is simple to obtain and can be computed exactly, obtaining an exact value $\mrm{BER}_{\mtt{DD}2}$ is much more challenging and thus we provide an accurate approximation.

To simplify the analysis, we assume the RIS has only one active receiver element, which we take to be element $N$. We further assume that this element completely absorbs the incoming signal power during the channel estimation stage and is then turned off in the data transmission stage. In the first time slot, a pilot signal $s_1 = 1$ is transmitted to generate the following received signal at the RIS: $y_{1}^\RIS = \sqrt{P}g_N + n_1^\RIS$, and so an estimate of $g_N$ can be obtained as $\hat{g}_N = y_{1}^\RIS/\sqrt{P}= g_N + \frac{1}{\sqrt{P}}n_1^\RIS$. 
From the second time slot to the $(N-1)$-th time slot, data symbols are transmitted and the received signal at the RIS is $y_{t}^\RIS = \sqrt{P}g_N s_{t} + n_t^\RIS,\;t = 2,\,\ldots,\,N-1$. An equalizer based on $\hat{g}_N$ is used by the RIS to detect the symbols as
\begin{align}
    \tilde{s}_t = \frac{y_t^\RIS}{\sqrt{P}\hat{g}_N} = \frac{\sqrt{P}g_N s_{t} + n_t^\RIS}{\sqrt{P}g_N + n_1^\RIS},\;t = 2,\,\ldots,\,N-1
    \label{eq_RIS_DD}
\end{align}

To evaluate the BER of~\eqref{eq_RIS_DD}, we consider the rotated signal
\begin{align}
    \Bar{s}_t = \tilde{s}_ts_t^* = \frac{\sqrt{P}g_N + \tilde{n}_t^\RIS}{\sqrt{P}g_N + n_1^\RIS}.
\end{align}
Since the error rates are the same for different time indices $t$, we drop the subscript $t$ for convenience. 
 Let $r_{\bar{s}} = \sqrt{\bar{s}_{\Re}^2 + \bar{s}_{\Im}^2}$ and $\theta_{\bar{s}} = \arctan(\bar{s}_{\Im}/\bar{s}_{\Re}) $, whose joint pdf can be obtained as follows~\cite{Baxley2010Complex}:
\begin{align}
    p(r_{\bar{s}},\theta_{\bar{s}}) = \frac{r_{\bar{s}}(\sigma_{y^\RIS}^4 - P^2\sigma_g^4)}{\pi\sigma_{y^\RIS}^8}\bigg(\frac{1 + r_{\bar{s}}^2}{\sigma_{y^\RIS}^2} -  \frac{2P\sigma_g^2r_{\bar{s}}\cos\theta_{\bar{s}}}{\sigma_{y^\RIS}^4}\bigg)^{-2},\notag
\end{align}
where $\sigma_{y^\RIS}^2 = P\sigma_g^2 + N_0^\RIS$. The probability that the detected symbol is $\mca{S}({\ell})$ given that $\mca{S}(0)$ was transmitted is
\begin{align}
    p_{\ell}^{\DDone} = \int_{0}^{\infty}\int_{\frac{(2{\ell}-1)\pi}{D}}^{\frac{(2{\ell}+1)\pi}{D}} p(r_{\bar{s}},\theta_{\bar{s}}) d\theta_{\bar{s}} dr_{\bar{s}}.
    \label{eq_p_ell_DD1}
\end{align}
Similar to~\eqref{eq_BER_PD}, we can compute $\mrm{BER}_{\mrm{DD}1}$ using $p_{\ell}^{\DDone}$ in~\eqref{eq_p_ell_DD1}.

Next we compute $\mrm{BER}_{\mrm{DD}2}$. Let $\mbf{\hat{s}} = [1,\,\hat{s}_2,\,\ldots,\,\hat{s}_{N-1}]^T$ be the detected data symbol vector forwarded to the BS by the RIS, where $\hat{s}_t = \argmin_{s\in\mca{S}} |s - \tilde{s}_t|^2$. Let $\bsm{\Phi} = \mbf{V}_{N-1}$ be the phase shift matrix of the RIS during the channel estimation stage. The BS uses $\mbf{\hat{s}}$ to obtain an estimate of the cascaded channel as follows:
\begin{align}
    \hat{a}_i^* &= \frac{1}{N-1}\mbf{a}^H\bsm{\Phi} \diag(\mbf{s}) \diag(\mbf{\hat{s}})^{-1}\bsm{\Phi}_{i,:}^H  + \tilde{n}_i^*\\
    &= a_i^* + \underbrace{\frac{1}{1 - N}\mbf{a}^H\bsm{\Phi} \diag{(\bsm{\xi})}\bsm{\Phi}_{i,:}^H + \tilde{n}_i^*}_{\epsilon_i^*}
\end{align}
where $\bsm{\xi} = [\xi_1,\, \ldots,\, \xi_{N-1}]^T$ with $\xi_t = 1 - s_t\hat{s}_t^*$. Note that $\xi_1 = 0$ since $s_1=1$ is a known pilot signal. The noise terms $\tilde{n}_i^* = \frac{1}{\sqrt{P}(N-1)}\mbf{n}^T\diag{(\mbf{\hat{s}})}^{-1}\bsm{\Phi}_{i,:}^H$ are i.i.d. for different indices $i$ and the channel estimation error term is $\epsilon_i^* = \delta_i^* + \tilde{n}_i^*$ where
\begin{equation}
    \delta_i^* = \frac{1}{1-N}\mbf{a}^H\bsm{\Phi} \diag {\left(\bsm{\xi}\right)}\bsm{\Phi}_{i,:}^H.
    \label{eq_chan_est_error}
\end{equation}
From~\eqref{eq_chan_est_error}, it can be seen that the channel estimation error depends on the data symbols detected at the RIS.

Our DD-SE analysis will rely on the results in Lemma~\ref{lemma_1} presented next.
\begin{lemma}
Let $\mu_\xi \overset{\Delta}{=} \E{\xi_t}$, $\mu_{|\xi|^2} \overset{\Delta}{=} \E{\left|\xi_t\right|^2}$, and $\mu_{\xi^2} \overset{\Delta}{=} \E{\left(\xi_t\right)^2}$, then we have
\begin{align}
    \mu_\xi &= 1 - p_{0}^{\DDone} + p_{\frac{D}{2}}^{\DDone} - 2\sum_{d=1}^{\frac{D}{2}-1}p_{d}^{\DDone}\cos\left(\frac{2\pi d}{D}\right), \label{eq:mu_xi}\\
    \mu_{\xi^2} &= 1 - p_{0}^{\DDone} + 3p_{\frac{D}{2}}^{\DDone} \, + \notag \\
    &\qquad 2\sum_{d=1}^{\frac{D}{2} - 1}p_{d}^{\DDone} \left[\cos\left(\frac{4\pi d}{D}\right) - 2\cos\left(\frac{2\pi d}{D}\right)\right], \label{eq:mu_xi2}\\
    \mu_{|\xi|^2} &= 4p_{\frac{D}{2}}^{\DDone} + 4\sum_{{d}=1}^{\frac{D}{2}-1} p_{{d}}^{\DDone} \left[1-\cos \left(\frac{2\pi{d}}{D}\right)\right] .\label{eq:mu_absxi2}
\end{align}
\label{lemma_1}
\end{lemma}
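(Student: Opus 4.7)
The plan is to observe that, by the rotational symmetry of the $D$-PSK constellation and the isotropy of the noise in the RIS receive model, the distribution of $\xi_t = 1 - s_t \hat{s}_t^*$ depends only on the \emph{error index} $\ell$ such that $\hat{s}_t = \mathcal{S}(m+\ell)$ when $s_t = \mathcal{S}(m)$. Hence I may assume without loss of generality that the transmitted symbol is $\mathcal{S}(0)$, in which case
\begin{equation*}
    s_t \hat{s}_t^* \;=\; \mathcal{S}(0)\,\mathcal{S}(\ell)^* \;=\; e^{-j2\pi \ell/D}
\end{equation*}
with probability $p_\ell^{\DDone}$, so $\xi_t = 1 - e^{-j2\pi\ell/D}$ is a discrete random variable taking $D$ values with the known weights $p_0^{\DDone},\ldots,p_{D-1}^{\DDone}$.

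Next I would compute the three required moments directly by expansion, reducing each of them to a sum of the form $\sum_\ell p_\ell^{\DDone} e^{-j2\pi k\ell/D}$ for $k\in\{1,2\}$. Concretely:
\begin{align*}
    \mu_\xi &= 1 - \sum_{\ell=0}^{D-1} p_\ell^{\DDone}\, e^{-j2\pi\ell/D},\\
    \mu_{\xi^2} &= 1 - 2\sum_{\ell=0}^{D-1} p_\ell^{\DDone}\, e^{-j2\pi\ell/D} + \sum_{\ell=0}^{D-1} p_\ell^{\DDone}\, e^{-j4\pi\ell/D},\\
    \mu_{|\xi|^2} &= 2 - 2\sum_{\ell=0}^{D-1} p_\ell^{\DDone}\, \cos(2\pi\ell/D),
\end{align*}
where the last line uses $|1-e^{-j2\pi\ell/D}|^2 = 2(1-\cos(2\pi\ell/D))$.

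The remaining step, and the only one requiring care, is converting these complex-exponential sums into the real cosine form in \eqref{eq:mu_xi}--\eqref{eq:mu_absxi2}. For this I would invoke the symmetry $p_\ell^{\DDone} = p_{D-\ell}^{\DDone}$, which follows because the RIS noise $n_t^\RIS$ is circularly symmetric and thus the joint pdf $p(r_{\bar s},\theta_{\bar s})$ in the paragraph preceding \eqref{eq_p_ell_DD1} is even in $\theta_{\bar s}$. Splitting each sum into the terms $\ell = 0$, $\ell = D/2$, and pairs $(\ell, D-\ell)$ with $\ell = 1,\ldots,D/2-1$, and using $e^{-j2\pi\ell/D} + e^{j2\pi\ell/D} = 2\cos(2\pi\ell/D)$ together with $e^{-j2\pi(D/2)/D} = -1$ and $e^{-j4\pi(D/2)/D} = 1$, collapses every expression into the claimed real-valued trigonometric sums.

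The only conceptual obstacle is justifying the symmetry $p_\ell^{\DDone} = p_{D-\ell}^{\DDone}$; once that is in place, the rest is just algebraic bookkeeping of the three sums above. I would note the parity argument on $p(r_{\bar s},\theta_{\bar s})$ explicitly at the outset, then present the computation of $\mu_\xi$ in full and indicate that $\mu_{\xi^2}$ and $\mu_{|\xi|^2}$ follow by identical reasoning applied to the expansions displayed above.
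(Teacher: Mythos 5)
Your proposal is correct and follows essentially the same route as the paper's Appendix~B: reduce the three moments to $\E{s_t\hat{s}_t^*}$, $\E{(s_t\hat{s}_t^*)^2}$, and $\E{|\hat{s}_t-s_t|^2}$, expand these as sums over the error-index probabilities $p_\ell^{\DDone}$, and collapse to cosines by pairing symmetric error events. Your symmetry $p_\ell^{\DDone}=p_{D-\ell}^{\DDone}$, justified by the evenness of $p(r_{\bar s},\theta_{\bar s})$ in $\theta_{\bar s}$, is indeed the one that makes the pairing work (the appendix's stated pairing $d\leftrightarrow d+D/2$ reads as a typo for $d\leftrightarrow D-d$), so your write-up is, if anything, slightly cleaner on that point.
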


\begin{proof}
See Appendix~\ref{appendia_2}.
\end{proof}

To obtain the $\mrm{BER}_{\mtt{DD}2}$, we will compute the means and variances of the real and imaginary parts of the effective channel $f = \sum_{i=1}^{N-1} z_i$ during the data transmission stage, which are $\mu_{f_\Re} \overset{\Delta}{=} \E{f_\Re}$ and $\mu_{f_\Im} \overset{\Delta}{=} \E{{f_\Im}}$, $\sigma_{f_\Re}^2 \overset{\Delta}{=} \Var{{f_\Re}}$ and $\sigma_{f_\Im}^2 \overset{\Delta}{=} \Var{{f_\Im}}$.

\subsubsection{Computing the means $\mu_{f_\Re}$ and $\mu_{f_\Im}$}
Since $\mu_{f_\Re} = \sum_{i=1}^{N-1} \mu_{z_{i,\Re}}$ and $\mu_{f_\Im} = \sum_{i=1}^{N-1} \mu_{z_{i,\Im}}$, we need to compute $\mu_{z_{i,\Re}}$ and $\mu_{z_{i,\Im}}$, which are approximated as follows:
\begin{align}
    \mu_{z_{i,\Re}} \approx \frac{\E{|a_i|^2} + \Re\{\E{a_i\epsilon_i^*}\}}{\sqrt{\E{|a_i|^2} + 2\Re\{\E{a_i\epsilon_i^*}\} + \E{|\epsilon_i|^2}}}, \\
    \mu_{z_{i,\Im}} \approx \frac{\Im\{\E{a_i\epsilon_i^*}\}}{\sqrt{\E{|a_i|^2} + 2\Re\{\E{a_i\epsilon_i^*}\} + \E{|\epsilon_i|^2}}}.
\end{align}
Since
\begin{align}
    \E{a_i\epsilon_i^*} = \E{a_i\delta_i^*} 
    &= \frac{1}{1-N}\E{\sum_{n=1}^{N-1}a_ia_n^*\bsm{\Phi}_{n,:}\diag{\left(\bsm{\xi}\right)}\bsm{\Phi}_{i,:}^H} \notag \\
    &= \frac{1}{1-N}\E{|a_i|^2\bsm{\Phi}_{i,:}\diag{\left(\bsm{\xi}\right)}\bsm{\Phi}_{i,:}^H} \notag \\
    &= \frac{\sigma_a^2}{1-N}\sum_{t=2}^{N-1}\E{\xi_t} = \frac{N-2}{1-N}\sigma_a^2\mu_{\xi}, \notag
\end{align}
we have $\E{\Re\{a_i\epsilon_i^*\}} = \frac{N-2}{1-N}\sigma_a^2\mu_{\xi}$ and $\E{\Im\{a_i\epsilon_i^*\}} = 0$. Thus, the mean of $z_{i,\Re}$ and $z_{i,\Im}$ can be obtained as
\begin{align}
    \mu_{z_{i,\Re}} \approx \frac{\sigma_a^2 + \frac{N-2}{1-N}\sigma_a^2\mu_{\xi}}{\sqrt{\sigma_a^2+2\frac{N-2}{1-N}\sigma_a^2\mu_{\xi}+\sigma_\epsilon^2}}  \;\text{and}\;
    \mu_{z_{i,\Im}} \approx 0, \label{eq:mu_rR-mu_rI}
\end{align}
where $\sigma_a^2 \overset{\Delta}{=} \E{|a|^2}$ and 
$\sigma_\epsilon^2 \overset{\Delta}{=} \E{|\epsilon_i|^2} = \sigma_{\delta_i}^2 + \sigma_{\tilde{n}_i}^2$.
The variance of $\delta_i$ and $\tilde{n}_i$ ($\sigma_{\delta_i}^2$ and $\sigma_{\tilde{n}_i}^2$) are given as follows:
\begin{align}
    \sigma_{\delta_i}^2 &= \frac{\sigma_a^2}{(N-1)}\tr\left(\E{ \diag {\left(\bsm{\xi}\right)}\bsm{\Phi}_{i,:}^H\bsm{\Phi}_{i,:}\diag {\left(\bsm{\xi}^*\right)}}\right) \notag \\
    &= \frac{\sigma_a^2}{(N-1)}\sum_{t=2}^{N-1}\E{|\xi_t|^2} = \frac{N-2}{(N-1)}\sigma_a^2\mu_{|\xi|^2}
\end{align}
and $\sigma_{\tilde{n}_i}^2 = \frac{N_0^\BS}{P(N-1)}$. Hence, the variance of $\epsilon$ is
\begin{equation}
    \sigma_\epsilon^2 = \frac{P(N-2)\sigma_a^2\mu_{|\xi|^2} + N_0^\BS}{P(N-1)}.
\end{equation}

\subsubsection{Compute the variances $\sigma_{f_\Re}^2$ and $\sigma_{f_\Im}^2$} The variance of $f_\Re$ and $f_\Im$ are 
\begin{align}
    \sigma_{f_\Re}^2 &= \sum_{i=1}^{N-1} \Var{z_{i,\Re}} + 2 \sum_{i<t} \Cov{z_{i,\Re}, z_{t,\Re}},\\
    \sigma_{f_\Im}^2 &= \sum_{i=1}^{N-1} \Var{z_{i,\Im}} + 2 \sum_{i<t} \Cov{z_{i,\Im}, z_{t,\Im}}.
\end{align}
Since $\mrm{Var}[z_{i,\Re}] = \mbb{E}[z_{i,\Re}^2] - \mbb{E}[z_{i,\Re}]^2$ and $\mrm{Var}[z_{i,\Im}] = \mbb{E}[z_{i,\Im}^2] - \mbb{E}[z_{i,\Im}]^2$, and since $\mbb{E}[z_{i,\Re}]$ and $\mbb{E}[z_{i,\Im}]$ are already given in~\eqref{eq:mu_rR-mu_rI}, we now need to obtain $\mbb{E}[z_{i,\Re}^2]$ and $\mbb{E}[z_{i,\Im}^2]$, which can be approximated as follows:
\begin{align}
    \E{z_{i,\Re}^2} &\approx \frac{\E{|a_i|^4} +  2\E{|a_i|^2\Re\{a_i\epsilon_i^*\}} + \E{\Re\{a_i\epsilon_i^*\}^2}}{{\E{|a_i|^2} + 2\Re\{\E{a_i\epsilon_i^*}\} + \E{|\epsilon_i|^2}}}, \label{eq:ErR2}\\
    \E{z_{i,\Im}^2} &\approx \frac{\E{\Im\{a_i\epsilon_i^*\}^2}}{{\E{|a_i|^2} + 2\Re\{\E{a_i\epsilon_i^*}\} + \E{|\epsilon_i|^2}}}.\label{eq:ErI2}
\end{align}
The first term in the numerator of~\eqref{eq:ErR2} is $\E{|a_i|^4} = 4\sigma_a^4$. The second term in the numerator of~\eqref{eq:ErR2} can be obtained by computing $\E{|a_i|^2a_i\epsilon_i^*}$ since $\E{|a_i|^2\Re\{a_i\epsilon_i^*\}} = \Re\{\E{|a_i|^2a_i\epsilon_i^*}\}$. We have
\begin{align}
    \E{|a_i|^2a_i\epsilon_i^*} &= \E{|a_i|^2a_i\delta_i^*} \notag\\
    &= \frac{1}{1-N}\E{\sum_{n=1}^{N-1}|a_i|^2a_ia_n^*\bsm{\Phi}_{n,:}\diag{\left(\bsm{\xi}\right)}\bsm{\Phi}_{i,:}^H}\notag\\
    &= \frac{1}{1-N}\E{|a_i|^4\bsm{\Phi}_{i,:}\diag{\left(\bsm{\xi}\right)}\bsm{\Phi}_{i,:}^H}\notag\\
    &= \frac{1}{1-N}4\sigma_a^4\sum_{t=2}^{N-1}\E{\xi_t} = \frac{N-2}{1-N}4\sigma_a^4\mu_\xi, \label{eq:E_absa2_a_delta}
\end{align}
which is a real number, and so $\E{|a_i|^2\Re\{a_i\epsilon_i^*\}} = \frac{N-2}{1-N}4\sigma_a^4\mu_\xi$.

We exploit the following relations:
\begin{align}
    |a_i\epsilon_i^*|^2 &= \Re\{a_i\epsilon_i^*\}^2 + \Im
    \{a_i\epsilon_i^*\}^2,\\
    \Re\{(a_i\epsilon_i^*)^2\} &= \Re\{a_i\epsilon_i^*\}^2 - \Im\{a_i\epsilon_i^*\}^2,
\end{align}
to obtain $\E{\Re\{a_i\epsilon_i^*\}^2}$ (the last term in the numerator of~\eqref{eq:ErR2}) and $\E{\Im\{a_i\epsilon_i^*\}^2}$ (the numerator term of~\eqref{eq:ErI2}) as follows:
\begin{align}
    \E{\Re\{a_i\epsilon_i^*\}^2} &= \frac{\E{|a_i\epsilon_i^*|^2} + \Re\big\{\E{(a_i\epsilon_i^*)^2}\big\}}{2},\\
    \E{\Im\{a_i\epsilon_i^*\}^2} &= \frac{\E{|a_i\epsilon_i^*|^2} - \Re\big\{\E{(a_i\epsilon_i^*)^2}\big\}}{2}.
\end{align}
We need to find $\E{|a_i\epsilon_i^*|^2}$ and $\E{(a_i\epsilon_i^*)^2}$ to retrieve $\E{\Re\{a_i\epsilon_i^*\}^2}$ and $\E{\Im\{a_i\epsilon_i^*\}^2}$. The term $\E{|a_i\epsilon_i^*|^2}$ is given by 
\begin{align}
    \E{|a_i\epsilon_i^*|^2} &= \E{|a_i\delta_i^*|^2} + \E{|a_i\tilde{n}_i^*|^2} \notag \\
    &= \E{|a_i\delta_i^*|^2} + \frac{\sigma_a^2N_0^\BS}{P(N-1)}
\end{align}
where
\begin{align}
    &\E{|a_i\delta_i^*|^2} \notag \\
    &= \frac{\E{\left|a_i\mbf{a}^H\bsm{\Phi} \diag {\left(\bsm{\xi}\right)}\bsm{\Phi}_{i,:}^H\right|^2}}{(N-1)^2}\notag\\
    &= \frac{\E{\tr{\left\{|a_i|^2\bsm{\Phi}^H\mbf{a}\mbf{a}^H\bsm{\Phi}\diag {\left(\bsm{\Phi}_{i,:}^H\right)}\bsm{\xi}\bsm{\xi}^H\diag {\left(\bsm{\Phi}_{i,:}\right)}\right\}}}}{(N-1)^2} \notag\\
    &= \frac{\tr{\left\{\bsm{\Phi}^H\diag{(\bsm{\alpha}_i)}\bsm{\Phi}\diag {\left(\bsm{\Phi}_{i,:}^H\right)}\mbf{R}_{\bsm{\xi}\bsm{\xi}^H}\diag {\left(\bsm{\Phi}_{i,:}\right)}\right\}}}{(N-1)^2} \notag\\
    &=\frac{N-2}{(N-1)^2}\sigma_a^4\left((N+2)\mu_{|\xi|^2} + 3(N-3)\mu_{\xi}^2\right)
\end{align}
and $\mbf{R}_{\bsm{\xi}\bsm{\xi}^H} = \E{\bsm{\xi}\bsm{\xi}^H}$. Here, the diagonal elements of $\mbf{R}_{\bsm{\xi}\bsm{\xi}^H}$ are $\mu_{|\xi|^2}$, the off-diagonal elements are approximately
$\mu_{\xi}$, and $\bsm{\alpha}_i$ is a vector whose $i$-element is $4\sigma_a^4$ and whose other elements are $\sigma_a^4$.

The term $\E{(a_i\epsilon_i^*)^2}$ is given by
\begin{align}
    &\E{(a_i\epsilon_i^*)^2} = \E{(a_i\delta_i^*)^2}\notag\\
    &= \frac{\E{\left(a_i\mbf{a}^H\bsm{\Phi} \diag {\left(\bsm{\xi}\right)}\bsm{\Phi}_{i,:}^H\right)^2}}{(N-1)^2}\notag\\
    &=\frac{\E{\tr{\left\{a_i^2\diag {\left(\bsm{\Phi}_{i,:}^*\right)}\bsm{\Phi}^T\mbf{a}^*\mbf{a}^H\bsm{\Phi}\diag {\left(\bsm{\Phi}_{i,:}^H\right)}\bsm{\xi}\bsm{\xi}^T\right\}}}}{(N-1)^2}\notag\\
    &=\frac{\tr{\left\{4\sigma_a^4\diag {\left(\bsm{\Phi}_{i,:}^*\right)}\bsm{\Phi}^T\diag{(\mbf{e}_i)}\bsm{\Phi}\diag {\left(\bsm{\Phi}_{i,:}^H\right)}\mbf{R}_{\bsm{\xi}\bsm{\xi}^T}\right\}}}{(N-1)^2}\notag\\
    &=\frac{N-2}{(N-1)^2}4\sigma_a^4\big(\mu_{\xi^2} + (N-3)\mu_{\xi}^2\big)
\end{align}
where $\mbf{R}_{\bsm{\xi}\bsm{\xi}^T} = \E{\bsm{\xi}\bsm{\xi}^T}$. Here, the diagonal elements of $\mbf{R}_{\bsm{\xi}\bsm{\xi}^T}$ are $\mu_{\xi^2}$, the off-diagonal elements are approximately
$\mu_{\xi}^2$, and $\bsm{e}_i$ is a one-hot vector whose $i$-element is 1 and whose  other elements are zero. 

We now need to compute the covariances $\Cov{z_{i,\Re}, z_{t,\Re}}$ and $ \Cov{z_{i,\Im}, z_{t,\Im}}$, which are given by
\begin{align*}
    \Cov{z_{i,\Re}, z_{t,\Re}} &= \E{z_{i,\Re}z_{t,\Re}} - \E{z_{i,\Re}}\E{z_{t,\Re}}\\
    \Cov{z_{i,\Im}, z_{t,\Im}} &= \E{z_{i,\Im}z_{t,\Im}} - \E{z_{i,\Im}}\E{z_{t,\Im}}.
\end{align*}
Since $\E{z_{i,\Re}}$ and $\E{z_{i,\Im}}$ have been computed earlier, we need to find $\E{z_{i,\Re}z_{t,\Re}}$ and $\E{z_{i,\Im}z_{t,\Im}}$ which can be approximated as follows:
\begin{align}
    \E{z_{i,\Re}z_{t,\Re}} \approx  \frac{\E{(|a_i|^2 + \Re\{a_i\epsilon_i^*\})(|a_t|^2 + \Re\{a_t\epsilon_t^*\})}}{\sqrt{\E{\varkappa_{i,t}}}},
    \label{eq_E_zi_zt_Re}
\end{align}
and
\begin{align}
    \E{z_{i,\Im}z_{t,\Im}} \approx \frac{\E{\Im\{a_i\epsilon_i^*\}\Im\{a_t\epsilon_t^*\}}}{\sqrt{\E{\varkappa_{i,t}}}},
    \label{eq_E_zi_zt_Im}
\end{align}
where $\varkappa_{i,t}$ is given in~\eqref{eq_expanded_denom}. Since the expansion of $\varkappa_{i,t}$ also includes the numerator terms in~\eqref{eq_E_zi_zt_Re} and~\eqref{eq_E_zi_zt_Im}, to obtain $\E{z_{i,\Re}z_{t,\Re}}$ and $\E{z_{i,\Im}z_{t,\Im}}$, it suffices to compute the expectation of all the terms in~\eqref{eq_expanded_denom}.
\begin{figure*}[t!]
    \begin{align}
        \varkappa_{i,t} & = |a_i|^2|a_t|^2 + 2|a_i|^2\Re\{a_t\epsilon_t^*\} + |a_i|^2|\epsilon_t|^2 + 2|a_t|^2\Re\{a_i\epsilon_i^*\}  + 4\Re\{a_i\epsilon_i^*\}\Re\{a_t\epsilon_t^*\} \; + \;\notag\\
        &\hspace{2cm} 2\Re\{a_i\epsilon_i^*\}|\epsilon_t|^2 + |\epsilon_i|^2|a_t|^2 + 2|\epsilon_i|^2\Re\{a_t\epsilon_t^*\} + |\epsilon_i|^2|\epsilon_t|^2.
        \label{eq_expanded_denom}
    \end{align}
\end{figure*}

First, we have $\E{|a_i|^2|a_t|^2} = \sigma_a^4$. Using the same approach as in~\eqref{eq:E_absa2_a_delta}, we obtain
\begin{align}
    \E{|a_i|^2a_t\epsilon_t^*} &= \E{|a_t|^2a_i\epsilon_i^*} = \frac{N-2}{1-N}\sigma_a^4\mu_\xi.
\end{align}
The two terms $\E{|a_i\epsilon_t|^2}$ and $\E{|a_t\epsilon_i|^2}$ are also equal and given by
\begin{align}
    \E{|a_i\epsilon_t|^2} &= \E{|a_i\delta_t|^2} + \E{|a_i\tilde{n}_t|^2} \notag \\
    &= \E{|a_i\delta_t|^2} + \frac{\sigma_a^2N_0^\BS}{P(N-1)}
\end{align}
where
\begin{align*}
    &\E{|a_i\delta_t|^2} \notag \\
    &= \frac{\E{\left|a_i\mbf{a}^H\bsm{\Phi} \diag {\left(\bsm{\xi}\right)}\bsm{\Phi}_{t,:}^H\right|^2}}{(N-1)^2}\notag\\
    &= \frac{\E{\tr{\left\{|a_i|^2\bsm{\Phi}^H\mbf{a}\mbf{a}^H\bsm{\Phi}\diag {\left(\bsm{\Phi}_{t,:}^H\right)}\bsm{\xi}\bsm{\xi}^H\diag {\left(\bsm{\Phi}_{t,:}\right)}\right\}}}}{(N-1)^2}\notag\\
    &= \frac{\tr{\left\{\bsm{\Phi}^H\diag{(\bsm{\alpha}_i)}\bsm{\Phi}\diag {\left(\bsm{\Phi}_{t,:}^H\right)}\mbf{R}_{\bsm{\xi}\bsm{\xi}^H}\diag {\left(\bsm{\Phi}_{t,:}\right)}\right\}}}{(N-1)^2}.
\end{align*}

The terms $\E{a_i\epsilon_i^*a_t\epsilon_t^*}$ and $\E{a_i\epsilon_i^*a_t^*\epsilon_t}$ are obtained as follows:
\begin{align}
   &\E{a_i\epsilon_i^*a_t\epsilon_t^*} = \E{a_ia_t\delta_i^*\delta_t^*} \notag\\
    &= \frac{\E{a_ia_t\mbf{a}^H\bsm{\Phi} \diag {\left(\bsm{\xi}\right)}\bsm{\Phi}_{i,:}^H\mbf{a}^H\bsm{\Phi} \diag {\left(\bsm{\xi}\right)}\bsm{\Phi}_{t,:}^H}}{(N-1)^2}\notag\\
    &= \frac{\E{\tr{\left\{\bsm{\Phi}^Ta_ia_t\mbf{a}^*\mbf{a}^H\bsm{\Phi} \diag {\left(\bsm{\xi}\right)}\bsm{\Phi}_{i,:}^H\bsm{\Phi}_{t,:}^*\diag {\left(\bsm{\xi}\right)}\right\}}}}{(N-1)^2}\notag\\
    &= \frac{\tr{\left\{\bsm{\Phi}^T \bsm{\Sigma} \bsm{\Phi} \diag {\left(\bsm{\Phi}_{i,:}^H\right)}\mbf{R}_{\bsm{\xi}\bsm{\xi}^T}\diag {\left(\bsm{\Phi}_{t,:}^*\right)}\right\}}}{(N-1)^2}
\end{align}
where $\bsm{\Sigma}$ is a matrix with $\bsm{\Sigma}_{i,t} = \bsm{\Sigma}_{t,i} = \sigma_a^4$, and zeroes elsewhere, and
\begin{align}
   &\E{a_i\epsilon_i^*a_t^*\epsilon_t} = \E{a_ia_t^*\delta_i^*\delta_t} \notag\\
    &= \frac{\E{a_ia_t^*\mbf{a}^H\bsm{\Phi} \diag {\left(\bsm{\xi}\right)}\bsm{\Phi}_{i,:}^H\mbf{a}^T\bsm{\Phi}^* \diag {\left(\bsm{\xi}^*\right)}\bsm{\Phi}_{t,:}^T}}{(N-1)^2}\notag\\
    &= \frac{\E{\tr{\left\{\bsm{\Phi}^Ha_ia_t^*\mbf{a}\mbf{a}^H\bsm{\Phi} \diag {\left(\bsm{\xi}\right)}\bsm{\Phi}_{i,:}^H\bsm{\Phi}_{t,:}\diag {\left(\bsm{\xi}^*\right)}\right\}}}}{(N-1)^2}\notag\\
    &= \frac{\tr{\left\{\bsm{\Phi}^H\bsm{\Omega}\bsm{\Phi} \diag {\left(\bsm{\Phi}_{i,:}^H\right)}\mbf{R}_{\bsm{\xi}\bsm{\xi}^H}\diag {\left(\bsm{\Phi}_{t,:}\right)}\right\}}}{(N-1)^2}
\end{align}
where $\bsm{\Omega}$ is a matrix with $\bsm{\Omega}_{t,i} = \sigma_a^4$, and zeroes elsewhere. 

The two terms $\E{a_i\epsilon_i^*|\epsilon_t|^2}$ and $\E{a_i\epsilon_i^*|\epsilon_t|^2}$ are also equal and given by
\begin{align}
    \E{a_i\epsilon_i^*|\epsilon_t|^2} &= \E{a_i\delta_i^*|\delta_t|^2} + \E{a_i\delta_i^*|\tilde{n}_t|^2}\notag\\
    &\approx \E{a_i\delta_i^*|\tilde{n}_t|^2} = - \frac{(N-2)N_0^\BS}{P(N-1)^2}\sigma_a^2\mu_{\xi}.
    \label{eq_88}
\end{align}
Finally, the term $\E{|\epsilon_i|^2|\epsilon_t|^2}$ is approximated as
\begin{align}
    \E{|\epsilon_i|^2|\epsilon_t|^2} &= \E{|\delta_i|^2|\delta_t|^2} + 2\E{|\delta_i|^2|\tilde{n}_t|^2} + \E{|\tilde{n}_i|^2|\tilde{n}_t|^2}\notag\\
    &\approx 2\E{|\delta_i|^2|\tilde{n}_t|^2} + \E{|\tilde{n}_i|^2|\tilde{n}_t|^2}\notag\\
    &= \frac{2(N-2)N_0^\BS}{P(N-1)^2}\sigma_a^2\mu_{|\xi|^2} + \frac{(N_0^\BS)^2}{P^2(N-1)^2}.
    \label{eq_89}
\end{align}
The quantities $\E{a_i\delta_i^*|\delta_t|^2}$ in~\eqref{eq_88} and $\E{|\delta_i|^2|\delta_t|^2}$ in~\eqref{eq_89} are ignored in our approximation since they make a negligible contribution to the result.

We approximate the distribution of $f_\Re$ and $f_\Im$ as Gaussian with the above approximate means and variances, and so $\mrm{BER}_{\mrm{DD}2}$ can be obtained in the same manner as in~\eqref{eq_PD_pdf}~--~\eqref{eq_BER_PD}.



\begin{figure}[t!]
    \centering
    \includegraphics[width=0.95\linewidth]{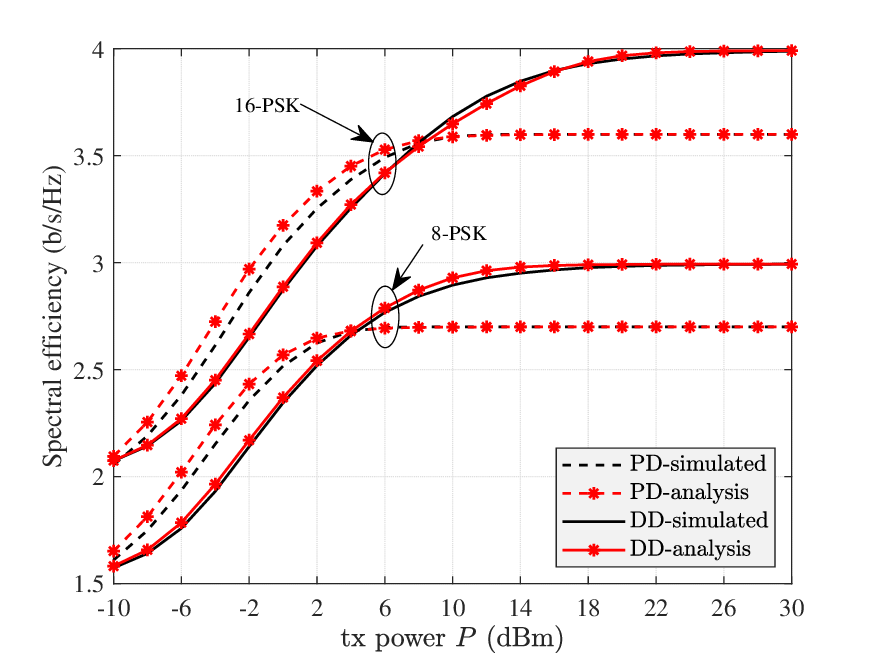}
    \caption{SE comparison with $K = M = 1$ and $N = 50$.}
    \label{fig:SE_comparison}
\end{figure}
\section{Numerical Results}
\label{sec_numerical_results}
In this section, we present various numerical results to verify our SE analysis and to show the benefits of the proposed DD channel estimation framework. We use a general channel model with $\hd{k} = \sqrt{\beta^{\mrm{UB}}_{k}}\mbf{\tilde{h}}_{\mrm{d},k}$, $\mbf{g}_k = \sqrt{\beta^{\mrm{UR}}_{k}}\mbf{\tilde{g}}_k$, $\mbf{H} = \sqrt{\beta^{\mrm{RB}}}\mbf{\tilde{H}}$ where $\mbf{\tilde{h}}_{\mrm{d},k} \sim \mca{CN}(\mbf{0},\bsm{\Sigma}^{\mrm{UB}}_k)$, $\mbf{\tilde{g}}_k \sim \mca{CN}(\mbf{0},\bsm{\Sigma}^{\mrm{UR}}_k)$, and $\mbf{\tilde{H}} = (\bsm{\Sigma}^{\mrm{B}})^{1/2}\mbf{\Bar{H}}(\bsm{\Sigma}^{\mrm{R}})^{1/2}$. The elements of $\mbf{\Bar{H}}$ are i.i.d. and normally distributed as $\mca{CN}(0,1)$. The large-scale fading coefficients are $\beta^{\mrm{UB}}_{k} = \beta_0 (d^{\mrm{UB}}_k/d_0)^{-\alpha^{\mrm{UB}}}$, $\beta^{\mrm{UR}}_{k} = \beta_0 (d^{\mrm{UR}}_k/d_0)^{-\alpha^{\mrm{UR}}}$, $\beta^{\mrm{RB}} = \beta_0 (d^{\mrm{RB}}/d_0)^{-\alpha^{\mrm{RB}}}$ where $\alpha^{\mrm{UB}}$, $\alpha^{\mrm{UR}}$, $\alpha^{\mrm{RB}}$ are the respective path loss exponents, and $d^{\mrm{UB}}_k$, $d^{\mrm{UR}}_k$, $d^{\mrm{RB}}$ are the respective distances between user-$k$ and the BS, user-$k$ and the RIS, and the RIS and BS. We set $\beta_0 = -20$ dB as the path loss at the reference distance $d_0 = 1$m, $\alpha^{\mrm{UB}} = 4$, $\alpha^{\mrm{UR}} = \alpha^{\mrm{RB}} = 2.2$, and the coherence block length at $\tauc = 500$ symbols. If not specifically stated, the noise power is set to $-169$ dBm/Hz and a bandwidth of 1 MHz is assumed. The number of elements with active receivers at the RIS is taken to be equal to the number of users.

First, we numerically validate our derived analytical SE expressions in Figs.~\ref{fig:SE_comparison} and~\ref{fig:SE_crosspoint_vs_N}. Here, we set both the user-RIS and RIS-BS distances to 100 m. In Fig.~\ref{fig:SE_comparison}, the SE is evaluated for $50$ RIS elements versus the transmit power for the cases of 8-PSK and 16-PSK modulation. It is observed that when the transmit power is low, i.e., at low signal-to-noise ratios (SNR), the SE of the PD approach is higher than for DD, but the situation reverses as the SNR grows. This is because the channel estimation performance of the DD approach strongly depends on the data detection performance at the RIS; at higher SNRs the data symbols detected by the RIS are more reliable and this results in better channel estimation and higher spectral efficiency. There is a critical SNR point at which the DD approach begins to perform better than PD, which in this scenario is about 4-dBm and 8-dBm for 8-PSK and 16-PSK, respectively. It can also be seen from Fig.~\ref{fig:SE_comparison} that our analytical SE approximations match well with the numerical results and accurately predict the performance crossover point. Thus the analytical SE can be used in the system design to determine the crossing point and decide whether the PD or DD approach should be used.

We evaluate the SE of the PD and DD approaches as the number of RIS elements $N$ increases in Fig.~\ref{fig:SE_crosspoint_vs_N}, where the transmit power is fixed at 5-dBm. It is interesting to observe that increasing the number of RIS elements can actually lead to a reduction in SE for the PD framework, since the pilot overhead of the PD approach grows proportionally with $N$ leading to a reduction in the number of time slots available for data transmission. On the other hand, the pilot overhead of the DD framework does not depend on the number of RIS elements, and thus the DD approach does not suffer from SE reduction as $N$ increases. Again, our analysis accurately predicts the performance crossover point, which is an important factor for the system design. 
\begin{figure}
    \centering
    \includegraphics[width=0.95\linewidth]{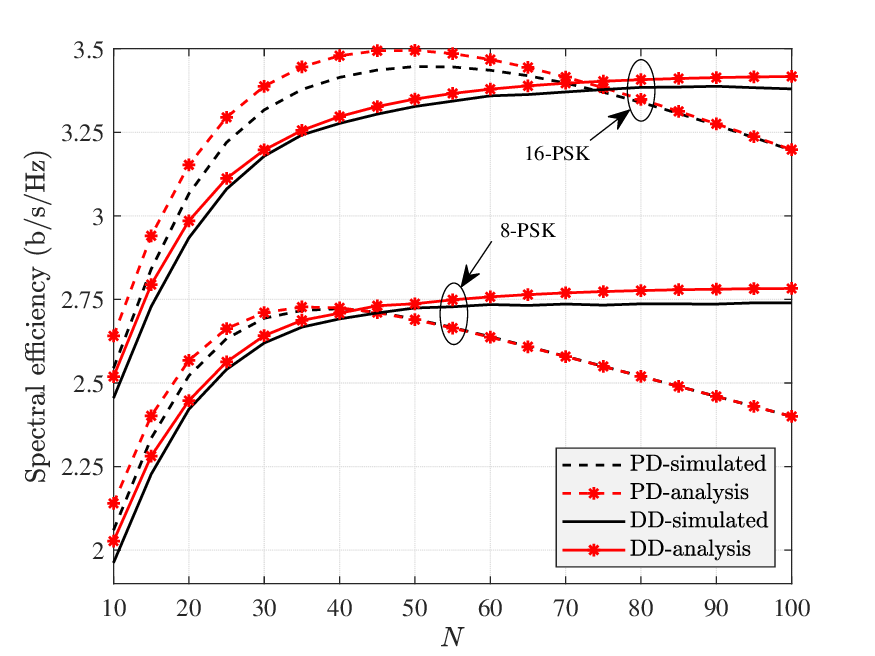}
    \caption{SE comparison with $K = M = 1$, $N$ varies, and $P = 5$ dBm.}
    \label{fig:SE_crosspoint_vs_N}
\end{figure}
\begin{figure*}[t!]
    \centering
    \begin{subfigure}[t]{0.45\linewidth}
        \centering
        \includegraphics[width=\linewidth]{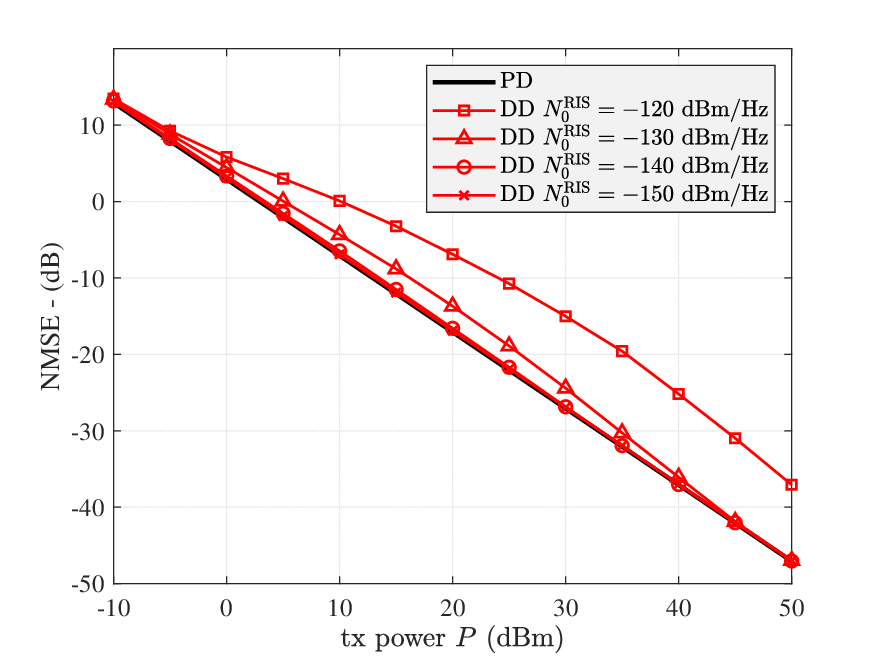}
        \caption{Channel estimation}
        \label{fig_CE_K1}
    \end{subfigure}~
    \begin{subfigure}[t]{0.45\linewidth}
        \centering
        \includegraphics[width=\linewidth]{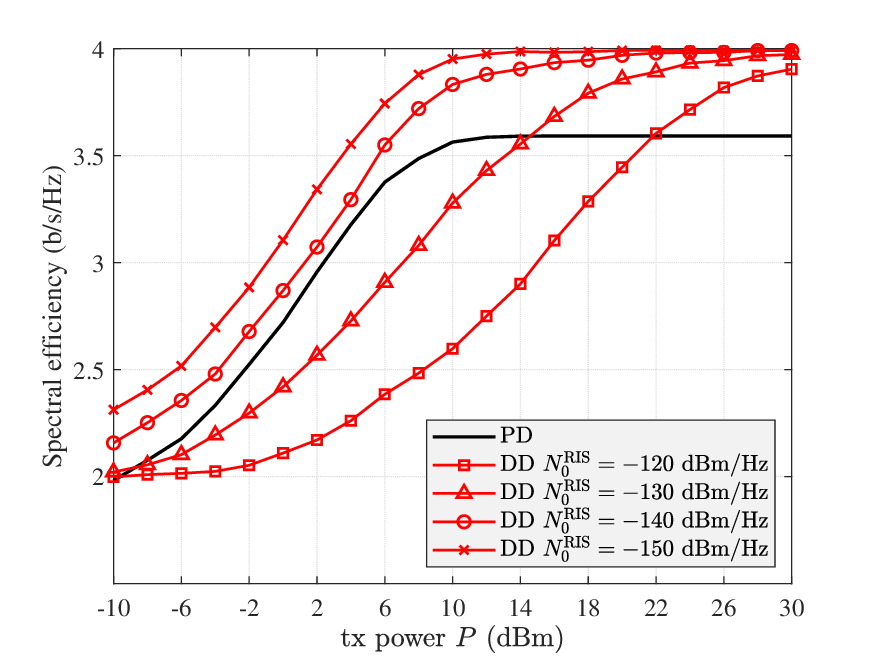}
        \caption{Spectral efficiency}
        \label{fig_SE_K1}
    \end{subfigure}
    \caption{Performance comparison for $K=1$, $M = 8$, $N = 50$, $\rho^\mcA_i = 0.5$, and 16-PSK.}
    \label{fig_CE_SE_K1_comparison}
\end{figure*}

\begin{figure}
    \centering
    \includegraphics[width=0.95\linewidth]{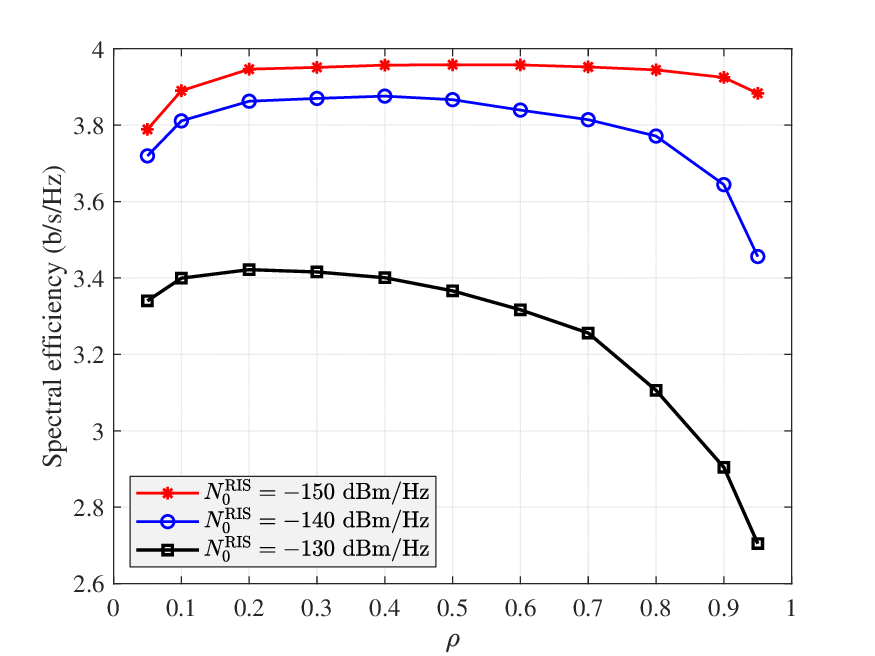}
    \caption{Spectral efficiency of the DD approach as $\rho^\mcA_i$ varies, $P = 10$ dBm, $K = 1$, $M=8$, $N=50$, and 16-PSK signalling.}
    \label{fig:SE_vs_rho}
\end{figure}
In Figs.~\ref{fig_CE_SE_K1_comparison} and~\ref{fig:SE_vs_rho} we also consider a single user scenario but the BS is equipped with multiple antennas. After the channel estimation stage, the phase shift vector $\bsm{\phi}$ of the RIS is optimized to maximize the effective channel strength $\|\mbf{\hat{h}}_{\mrm{d},1} + \mbf{\hat{A}}_1\bsm{\phi}\|^2$, which is solved by semi-definite relaxation (SDR). Fig.~\ref{fig_CE_SE_K1_comparison} shows the channel estimation and spectral efficiency performance for $M=8$, $N=50$, $\rho^\mcA_i = 0.5$, and 16-PSK signalling with different noise power levels and user transmit powers. The normalized mean-squared error (NMSE) is computed as $\E{\|\mbf{\hat{H}}_{\mrm{c},1} - \Hck{1}\|_{\mrm{F}}^2/\|\Hck{1}\|_{\mrm{F}}^2}$. The results in Fig.~\ref{fig_CE_K1} show that the PD method achieves a better channel estimate than DD, but this gain is offset by the increased training overhead for either higher transmit power or a lower noise figure when the DD method can reliably decode the data at the RIS.

Fig.~\ref{fig:SE_vs_rho} illustrates that there is a trade-off in the choice of the fraction of the incident power $\rho^\mcA_i$ that is reflected by the RIS elements with active receivers. A larger $\rho^\mcA_i$ means more signal power is reflected and less is sensed by the RIS. When the amount of signal power sensed by the RIS is too small, the noise at the RIS may dominate the received signal and cause data detection errors, which in turn leads to lower channel estimation accuracy and SE. One the other hand, if the amount of signal power sensed by the RIS is large so as to efficiently recover the data symbols at the RIS, the signals reflected from the RIS to the BS will be weaker, which can lead to less accurate channel estimation at the BS and a reduction in  SE as well. The trade-off is not too serious to handle for small noise levels, but becomes more important as the SNR decreases. For the cases considered in this example, a relatively small value such as $\rho^\mcA_i=0.2$ appears to provide the best system performance.

To study the case of multiple users, we position the RIS and BS at the locations $(x,y)=(50,50)$ and $(x,y)=(100,0)$, respectively, and we locate 
the users randomly within a square whose side length is $20$m and is centered at the origin. Simulation results for a scenario with $K=4$, $M = 8$, $N = 200$, $\rho^\mcA_i = 0.5$, and 16-PSK signalling are given in Fig.~\ref{fig:SE_multiuser}. In the sub-phase 2b, we employ the conventional zero-forcing (ZF) detector for recovering data symbols at the RIS. To configure the RIS phase shift after the channel estimation stage, we find the $\bsm{\phi}$ that maximizes the minimum signal-to-interference plus noise ration (SINR) using the SDR approach as in~\cite{Wu2019IRS}. It is seen that while the typical user and the other users have the same SE for the PD approach, the SE of the typical user is much higher than the SE of the other users when the DD approach is used since only the typical user transmits during the $N$ time slots of phase-1, while all users transmit data in phase-2. This creates a fairness issue that can be addressed in a number of ways. For example, the red curve shows the result obtained by rotating the role of the typical user among all the users over different coherence blocks. In this approach, the average SE of all the users will be the same, and an improvement compared with the unbalanced case is obtained. In particular, the fair DD approach yields approximately a 60\% in SE performance compared to the PD approach.
\begin{figure}
    \centering
    \includegraphics[width=0.95\linewidth]{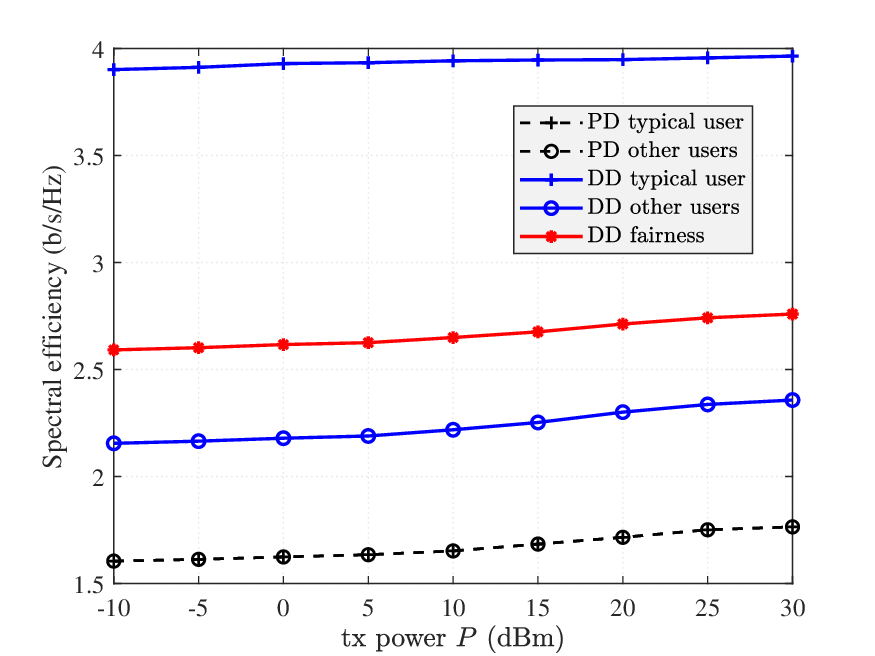}
    \caption{Spectral efficiency comparison with $K=4$, $M = 8$, $N = 200$, $\rho^\mcA_i = 0.5$, and 16-PSK.}
    \label{fig:SE_multiuser}
\end{figure}

Finally, we study the effect of the number of sensing elements $N_\mcA$ on the spectral efficiency in Fig.~\ref{fig:SE_multiuser_vs_NA}. The noise power at the RIS is set to $-120$ dBm/Hz and the transmit power $P$ is 10 dBm. Interestingly, increasing the number of sensing elements $N_\mcA$ only slightly improves the spectral efficiency of the DD approach, indicating that very few sensing elements at the RIS are necessary to achieve the benefit of decision direction. The SE improves more with increasing $N_\mcA$ for the PD approach, since unlike DD, increasing $N_\mcA$ results in a reduction in the pilot overhead of PD.

\section{Conclusion}
\label{sec_conclusion}
In this paper, we have proposed a decision-directed channel estimation framework for general unstructured RIS channel models. It has been shown that with the help of some RIS elements with active receivers, it is possible to accurately estimate the CSI with a pilot overhead only proportional to the number of users and thus significantly improve the spectral efficiency compared to systems with passive RIS arrays. We also performed an intensive spectral efficiency analysis to verify the efficiency of the proposed DD framework. Our analysis takes into account both the channel estimation and data detection errors at both the RIS and the BS, and thus accurately reflects the data detection uncertainty inherent in the decision directed approach.
\begin{figure}
    \centering
    \includegraphics[width=0.95\linewidth]{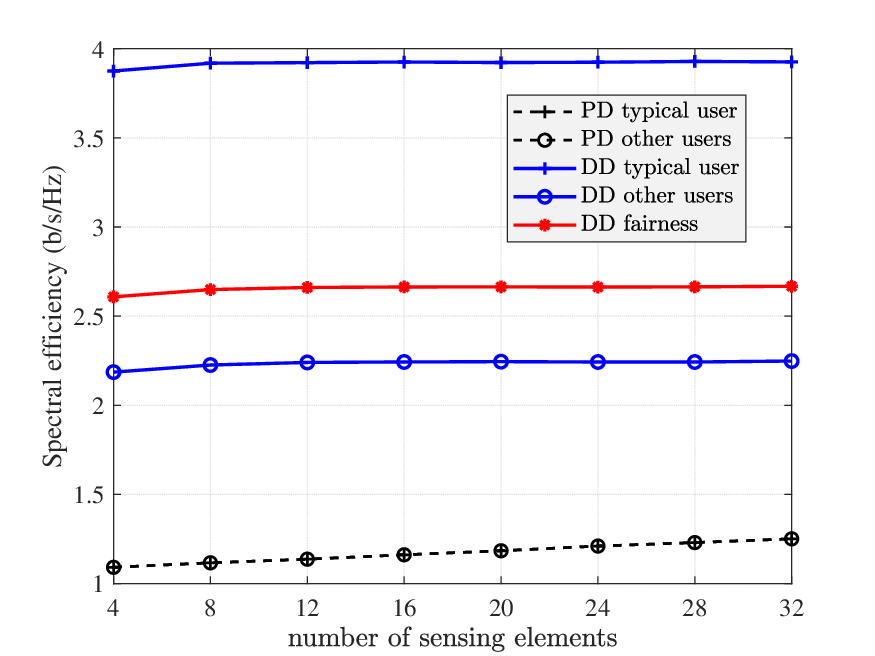}
    \caption{Spectral efficiency comparison versus number of sensing elements $N_\mcA$ with $K=4$, $M = 4$, $N = 200$, $\rho^\mcA_i = 0.5$, 16-PSK signalling, transmit power $P = 20$ dBm.}
    \label{fig:SE_multiuser_vs_NA}
\end{figure}
\ifCLASSOPTIONcaptionsoff
  \newpage
\fi

\appendices
\section{Proof of Theorem~\ref{theorem_1}.}
\label{appendia_1}
The symbol error rate (SER) can be approximated as
\begin{equation}
    \mrm{SER} \approx \mbb{P}[\tilde{y}_\Re\tan\theta - \tilde{y}_\Im \leq 0] + \mbb{P}[\tilde{y}_\Re\tan\theta + \tilde{y}_\Im \leq 0]
\end{equation}
where $\tilde{y}_\Re\tan\theta - \tilde{y}_\Im = 0$ and $\tilde{y}_\Re\tan\theta + \tilde{y}_\Im = 0$ define the rotated decision boundaries. We have  $(\tilde{y}_\Re\tan\theta - \tilde{y}_\Im) \sim \mca{N}(\tilde{\mu},\tilde{\sigma}^2)$ and $(\tilde{y}_\Re\tan\theta + \tilde{y}_\Im) \sim \mca{N}(\tilde{\mu},\tilde{\sigma}^2)$ where
\begin{align*}
    \tilde{\mu} &= \sqrt{P}N\mu_{z_\Re}\tan\theta,\\
    \tilde{\sigma}^2&= \left(PN\sigma_{z_\Re}^2+\frac{N_0}{2}\right)\tan^2\theta + PN\sigma_{z_\Im}^2+\frac{N_0}{2}.
\end{align*}
Therefore,
\begin{align}
    &\mbb{P}[\tilde{y}_\Re\tan\theta - \tilde{y}_\Im \leq 0] = \mbb{P}[\tilde{y}_\Re\tan\theta + \tilde{y}_\Im \leq 0]\notag \\
    &=Q\left(\frac{\sqrt{P}N\mu_{z_\Re} \tan \theta}{\sqrt{\left(PN\sigma_{z_\Re}^2+\frac{N_0}{2}\right)\tan^2\theta + PN\sigma_{z_\Im}^2+\frac{N_0}{2}}}\right),
\end{align}
which means the SER can be approximated as 
\begin{equation}
    \mrm{SER} \approx 2Q\left(\frac{\sqrt{P}N\mu_{z_\Re} \tan \theta}{\sqrt{\left(PN\sigma_{z_\Re}^2+\frac{N_0}{2}\right)\tan^2\theta + PN\sigma_{z_\Im}^2+\frac{N_0}{2}}}\right).
    \label{eq_SER_PD_high_SNR}
\end{equation}

At high SNRs, $\epsilon$ is small, and we have
\begin{align}
    \mu_{z_\Re} &= \E{z_{i,\Re}} \approx \E{|a_i|} = \E{|h_i|}\E{|g_i|} = \frac{\pi}{4}\sigma_a,
    \label{eq_mu_zRe}\\
    \sigma_{z_\Re}^2 &= \mrm{Var}[z_{i,\Re}] = \mbb{E}[z_{i,\Re}^2] - |\mbb{E}[z_{i,\Re}]|^2\notag\\
    &\approx \mbb{E}[|a_i|^2] - \mbb{E}[|a_i|]^2 = \left(1 - \frac{\pi^2}{16}\right)\sigma_a^2.
    \label{eq_sigma2_zRe}
\end{align}
Substituting~\eqref{eq_mu_zRe} and~\eqref{eq_sigma2_zRe} into~\eqref{eq_SER_PD_high_SNR} and using the result that $\mrm{BER} \approx \mrm{SER}/\log_2(D)$ for a Gray code at high SNRs gives us the approximated BER in~\eqref{eq_BER_PD_high_SNR}.

\section{Proof of Lemma~\ref{lemma_1}.}
\label{appendia_2}
We have
\begin{align}
    \E{\xi_t} &= 1 - \E{s_t\hat{s}_t^*}, \label{eq:Exi}\\
    \E{\xi_t^2}&=1-2\E{s_t\hat{s}_t^*} + \E{(s_t\hat{s}_t^*)^2}.\label{eq:Exi2}
\end{align}
Thus, to obtain $\E{\xi_t}$ and $\E{\xi_t^2}$, we need to compute $\E{s_t\hat{s}_t^*}$ and $\E{(s_t\hat{s}_t^*)^2}$, which are given as follows:
\begin{align}
    \E{s_t\hat{s}_t^*} &= \sum_{d=0}^{D-1} p_d^{\DDone}\mca{S}(0)\mca{S}(d)^* \notag\\
    &=p_{0}^{\DDone} - p_{\frac{D}{2}}^{\DDone} + 2\sum_{d=1}^{\frac{D}{2}-1}p_{d}^{\DDone}\cos\left(\frac{2\pi d}{D}\right), \label{eq:Esshat}\\
    \E{(s_t\hat{s}_t^*)^2} &= \sum_{d=0}^{D-1} p_d^{\DDone}(\mca{S}(0)\mca{S}(d)^*)^2 \notag \\
    &= p_{0}^{\DDone} + p_{\frac{D}{2}}^{\DDone} + 2\sum_{d=1}^{\frac{D}{2}-1}p_{d}^{\DDone}\cos\left(\frac{4\pi d}{D}\right). \label{eq:Esshat2}
\end{align}
Substituting~\eqref{eq:Esshat} and~\eqref{eq:Esshat2} into~\eqref{eq:Exi} and~\eqref{eq:Exi2}, we obtain $\mu_\xi$ and $\mu_{\xi^2}$ as in~\eqref{eq:mu_xi} and~\eqref{eq:mu_xi2}, respectively.
The expectation of $\left|\xi_t\right|^2$ is given as
\begin{align}
    \E{\left|\xi_t\right|^2} &= \E{|\hat{s}_t - s_t|^2} =\sum_{{d}=0}^{D-1} \left|\mca{S}({d})-\mca{S}(0)\right|^2\, p_{{d}}^{\DDone} \notag\\
    &= 4p_{\frac{D}{2}}^{\DDone} + 4\sum_{{d}=1}^{\frac{D}{2}-1} \left[1-\cos \left(\frac{2\pi{d}}{D}\right)\right] p_{{d}}^{\DDone} .\label{eq:Eabsxi2}
\end{align}
Note that in~\eqref{eq:Esshat}, \eqref{eq:Esshat2}, and~\eqref{eq:Eabsxi2} we have used the following results: $p_d^{\DDone} = p_{d+\frac{D}{2}}^{\DDone}$, $\mca{S}(0)\mca{S}(d)^* = \mca{S}(0)\mca{S}(d+D/2)^* = \cos(4\pi d/D)$, and $\left|\mca{S}({d})-\mca{S}(0)\right|^2 = \left|\mca{S}({d + D/2})-\mca{S}(0)\right|^2 = 4(1-\cos \left(2\pi{d}/D\right))$ for $d = 1,\,\ldots,\,D/2-1$.

\bibliographystyle{IEEEtran}
\bibliography{ref}

%









\end{document}